\documentclass[conference]{IEEEtran}
\IEEEoverridecommandlockouts
\usepackage{cite}
\usepackage{graphicx}
\usepackage{textcomp}
\usepackage{subfigure}
\usepackage{colortbl}
\usepackage[norelsize,ruled,vlined,linesnumbered]{algorithm2e}
\usepackage{algorithmic}
\usepackage{amsthm}
\usepackage{amsmath,amssymb,amsfonts}
\usepackage{booktabs}
\usepackage{url}
\usepackage{hyperref}
\usepackage{cleveref}
\usepackage{xcolor}

\newtheoremstyle{mystyle}% % Name
    {2.0mm}%	% Space above
    {2.0mm}% 	% Space below
    {\it}%		% Body font
    {0.0mm}%	% Indent amount
    {\scshape}% % Theorem head font
    {.}%	% Punctuation after theorem head
    { }%	% Space after theorem head, ' ', or \newline
    {}%		% Theorem head spec (can be left empty, meaning `normal')
\theoremstyle{mystyle}
\newtheorem{definition}{Definition}
\newtheorem{example}{Example}

\newtheorem{corollary}{Corollary}
\newtheorem{theorem}{Theorem}

\newcommand{\vs}{\vspace{1.5mm}}

\def\BibTeX{{\rm B\kern-.05em{\sc i\kern-.025em b}\kern-.08em
    T\kern-.1667em\lower.7ex\hbox{E}\kern-.125emX}}
\begin{document}

\newcommand{\todo}[1]{{\bfseries\color{red}ToDo: #1}}

\title{Duration-constrained Interval Joins}

\author{
\IEEEauthorblockN{Naoya Ehara}
\IEEEauthorblockA{\textit{Graduate School of Information Science and Technology} \\
\textit{The University of Osaka}\\
Osaka, Japan \\
ehara.naoya@ist.osaka-u.ac.jp}
\and
\IEEEauthorblockN{Daichi Amagata}
\IEEEauthorblockA{\textit{Graduate School of Information Science and Technology} \\
\textit{The University of Osaka}\\
Osaka, Japan \\
amagata.daichi@ist.osaka-u.ac.jp}
}

\maketitle

\begin{abstract}
Many databases, including temporal, uncertain, spatial, and trajectory databases, use interval data, and interval joins are among the most frequently used operators.
Many studies proposed efficient interval join algorithms, but they do not consider the overlap duration.
They return any pairs of intervals, even if they overlap very slightly, e.g., with no essential correlation or relationship.
Subsequent applications may suffer from such interval pairs, as they may be noise or unnecessary for the analysis.
Furthermore, outputting such pairs also increases join time.
To address the above issues, this paper addresses the problem of duration-constrained interval join.
Given two interval collections $R$ and $S$ and an overlap duration constraint $\epsilon$, this problem returns all interval pairs $(r,s)$ such that $r \in R$, $s \in S$, and the overlap duration between $r$ and $s$ is at least $\epsilon$.
A straightforward approach for this problem is to run a state-of-the-art interval join algorithm and then filter qualified interval pairs.
However, this is inefficient, as it generates unnecessary interval pairs and incurs duration computations, which cannot overcome the above efficiency concern.
We propose an efficient algorithm for this problem that removes the above drawback.
Furthermore, we propose two optimization techniques to improve the efficiency of our algorithm.
We conduct extensive experiments on three real-world interval datasets, and the results demonstrate that our algorithm outperforms existing techniques applicable to our problem.
\end{abstract}

\begin{IEEEkeywords}
interval data, join, overlap duration
\end{IEEEkeywords}

\section{Introduction}  \label{sec:introduction}
An interval $r$ is a pair of start and end points, i.e., $r = [r.start, r.end]$.
Many database systems, including temporal databases \cite{snodgrass1986temporal,bohlen2017temporal}, uncertain databases \cite{cheng2003probabilistic,dalvi2007efficient}, spatial databases \cite{georgiadis2023raster,georgiadis2025raster}, and trajectory databases \cite{wang2008flexible}, manage interval data.
Hence, efficient processing of interval data remains an important research topic.

Interval joins are one of the most important and frequently used operations in (temporal-related) databases \cite{gao2005join}.
Given two interval collections $R$ and $S$, an interval join returns all interval pairs $(r,s)$ such that $r \in R$ and $s \in S$ overlap.
\Cref{fig:example} illustrates an example of an interval join between $R = \{r_1, r_2\}$ and $S = \{s_1, s_2, s_3, s_4\}$. 
This query has many applications that require entity linking.
For example, consider employee databases that record the working period of each employee in different departments \cite{bouros2017forward,bouros2021memory,dignos2022leveraging,dignos2014overlap}.
An interval join returns all pairs of employees whose working period overlaps.
Moreover, interval joins are used in IoT environments, such as sensor networks \cite{piatov2016interval} and traffic monitoring in data centers \cite{pilourdault2016distributed}.
It is known, in addition, that an interval join can be used for a spatial join and equi-join on uncertain data \cite{bouros2017forward,bouros2021memory}.

\subsection{Motivation}
Although interval joins are useful in real-world applications, their computational cost is high when $|R| = n$ and $|S| = m$ are large, because it requires $O(nm)$ time in the worst case.
To alleviate this cost, many works \cite{gunadhi1991query,bouros2020band,bouros2018interval,piatov2021cache,dignos2022leveraging,piatov2016interval,zhang2023scalable} devised efficient interval join algorithms.
However, they do not consider the \textit{overlap duration} between two intervals.
Overlap joins return $(r,s)$ even if $r$ and $s$ overlap slightly.
In the above employee database example, two employees who almost never work together are included in the output, e.g., $(r_1, s_1)$ in \Cref{fig:example}.
Pairs of slightly overlapping intervals can be noise, potentially affecting analysis accuracy negatively.

Introducing the overlap duration as a constraint can remove this concern and obtain a more reasonable result.
Bioinformatics \cite{sarmashghi2019note}, medical data tests \cite{cho2020clinical}, and overlap functions \cite{da2020general} put importance on the overlap duration between intervals.
In addition, many mobile applications also benefit from this setting.
For example, GPS points can generate mobile user records of the store stay periods \cite{shirai2024estimating}.
Given a set of these intervals, a set of intervals of the store's advertisements (and/or promotions), and a duration constraint, this store can analyze how each promotion affects users while removing slightly overlapping intervals, which can be noise \cite{amagata2025target}.
Another example is life logs.
Wearable devices can record activity sessions and vital health sensor readings.
Assuming a database that maintains such historical records, joining them while considering a sufficient temporal overlap helps analyze causal relationships in users' health status.
Therefore, we consider \textit{duration-constrained interval joins} in this paper.

\subsection{Challenge}  \label{sec:challenge}
A straightforward approach for the problem of duration-constrained interval join is to run a state-of-the-art interval join algorithm and then pick all interval pairs that satisfy the duration constraint among the join result.
This approach has two drawbacks: (i) it requires an additional cost of computing the overlap duration for each join result, and (ii) the original join cost remains, although it can contain many interval pairs that do not satisfy a given overlap duration constraint.
Another approach is to evaluate the overlap duration, instead of evaluating whether overlapping or not, when running the interval join algorithm.
When there are many pairs of overlapping intervals, this approach needs to compute the overlap duration at least for each pair.
Thus, this approach also incurs a huge computational cost.
These challenges suggest that using a state-of-the-art interval join algorithm is not appropriate for duration-constrained interval joins, and an algorithm optimized for this problem is required.

\begin{figure}
    \centering
    \includegraphics[width=0.99\linewidth]{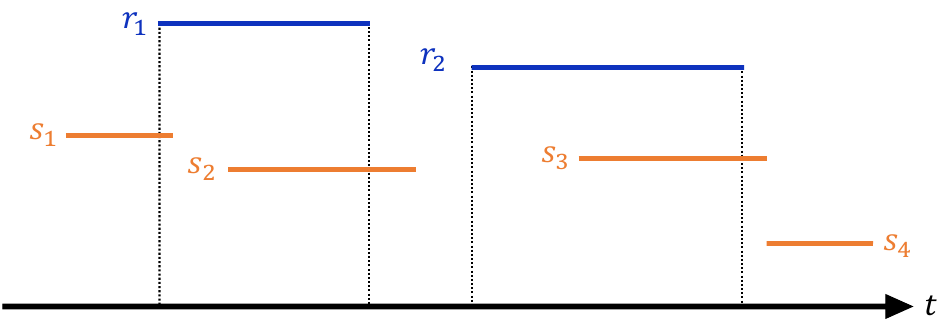}
    \caption{Example of an interval join. The result of join between $R$ and $S$ includes $(r_1,s_1)$, $(r_1, s_2)$, and  $(r_2, s_3)$.}
    \label{fig:example}
\end{figure}

\subsection{Contribution}
To address the above challenges, this paper makes the following contributions.

% \vs
% \noindent
% (1) We address the problem of duration-constrained interval join.
% To our knowledge, this is the first work that solves this problem.

\vs
\noindent
(1) We propose a novel algorithm for the duration-constrained interval join problem.
By exploiting a new thresholding approach with a grid index, our algorithm avoids unnecessary comparisons for both interval pairs included and not included in the result.

\vs
\noindent
(2) We propose two optimization techniques for our algorithm.
The first optimization reduces the cost of comparisons between intervals.
Because our algorithm employs a nested-loop join, our second optimization exploits batch processing for similar intervals.

\vs
\noindent
(3) We conduct extensive experiments on three real-world interval datasets.
The experimental results demonstrate that our algorithm outperforms existing techniques applicable to our problem.

\subsection{Organization}
The rest of this paper is organized as follows.
\Cref{sec:preliminary} introduces the formal problem definition.
\Cref{sec:proposal} presents our algorithm.
We report our experimental results in \Cref{sec:experiment}, and related works are reviewed in \Cref{sec:related-work}.
Finally, this paper is concluded in \Cref{sec:conclusion}.

\section{Preliminary}   \label{sec:preliminary}
An interval $r$ is represented as $r = [r.start, r.end]$, i.e., $r$ starts at $r.start$ and ends at $r.end$.
The interval length of $r$ is denoted by $|r|$, and $|r| = r.end - r.start$.
For ease of understanding, we first define the interval join problem.

\begin{definition}[Interval join problem]   \label{def:join}
Given two interval sets $R$ and $S$, this problem returns $R \bowtie S$ such that
\begin{align*}
    & R \bowtie S = \\
    &\{ (r, s) \,|\, r \in R, s \in S, (r.start \leq s.end) \wedge (s.start \leq r.end) \}.
\end{align*}
\end{definition}

We consider a variant of the above problem.
Let $l(r,s)$ be the overlap duration between $r$ and $s$.
Then,
\begin{equation*}
    l(r,s) = \min\{r.end, s.end\} - \max\{r.start, s.start\}.
\end{equation*}
Note that $\epsilon$ is application-dependent and is specified on demand.
Our problem is defined as follows:

\begin{definition}[Duration-constrained interval join problem]
Given two interval sets $R$ and $S$ and a duration constraint $\epsilon > 0$, this problem returns $R \bowtie_{\epsilon} S$ such that
\begin{equation*}
    R \bowtie_{\epsilon} S = \{ (r, s) \in R \bowtie S \,|\, l(r,s) \geq \epsilon \}.
\end{equation*}
\end{definition}

\noindent
We see that the interval join problem is a special case of our problem with $l(r,s) > 0$.

In this work, as with many existing studies that consider interval data, e.g., \cite{amagata2024independent,amagata2024independent_,amagata2024efficient,amagata2025top,piatov2021cache,bouros2017forward,bouros2018interval,bouros2020band,bouros2021memory,bouros2024hint,piatov2016interval}, we assume that $R$ and $S$ are memory-resident and do not receive (frequent) updates.
The objective of this work is to devise an efficient algorithm for the duration-constrained interval join problem.

\section{Proposed Algorithm}    \label{sec:proposal}
This section presents our proposed algorithm.

\vs
\noindent
\textbf{Main idea.}
A join between $n$ and $m$ objects requires $O(nm)$ time, because the join size is $O(nm)$ in the worst case.
Therefore, to improve the practical efficiency of the duration-constrained interval join, it is important to reduce constant factors, i.e., unnecessary comparisons between intervals.
To this end, we require a technique that identifies whether $(r,s) \in R \bowtie_{\epsilon} S$ or not \textit{without comparisons}.
As the first step for achieving this, we define two thresholds.

\begin{definition}
Given an interval $r \in R$ and a duration constraint $\epsilon$, $\theta_{start}(r)$ and $\theta_{end}(r)$ are respectively defined as follows.
\begin{align*}
    \theta_{start}(r)   &= r.end - \epsilon   \\
    \theta_{end}(r)     &= r.start + \epsilon
\end{align*}
\end{definition}

\noindent
Then, we have the following theorems.
\begin{theorem} \label{theorem:filter}
Given two intervals $r \in R$ and $s \in S$ and a duration constraint $\epsilon$, assume that $|r| \geq \epsilon$ and $|s| \geq \epsilon$.
(i) If $s.start \leq r.start$ and $\theta_{end}(r) \leq s.end$, we have $(r,s) \in R \bowtie_{\epsilon} S$.
(ii) If $\theta_{start}(r) < s.start$ or $s.end < \theta_{end}(r)$, we have $(r,s) \notin R \bowtie_{\epsilon} S$.
\end{theorem}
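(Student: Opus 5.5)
The proof is a direct unfolding of the definitions of $l(r,s)$, $\theta_{start}(r)$ and $\theta_{end}(r)$. I treat parts (i) and (ii) separately. Both rest on the bounds $\max\{r.start, s.start\} \geq r.start$, $\max\{r.start, s.start\} \geq s.start$, $\min\{r.end, s.end\} \leq r.end$ and $\min\{r.end, s.end\} \leq s.end$. One preliminary remark is also needed. If $l(r,s) \geq \epsilon > 0$, then $\max\{r.start,s.start\} < \min\{r.end,s.end\}$, which gives $r.start \leq s.end$ and $s.start \leq r.end$. So $l(r,s) \geq \epsilon$ alone already implies $(r,s) \in R \bowtie S$, and hence $(r,s) \in R \bowtie_{\epsilon} S$. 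Conversely, $(r,s) \notin R \bowtie_{\epsilon} S$ follows once we show $l(r,s) < \epsilon$.

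\emph{Part (i).} Since $s.start \leq r.start$, we have $\max\{r.start, s.start\} = r.start$, so $l(r,s) = \min\{r.end, s.end\} - r.start$. I bound each term in the minimum. First, $r.end - r.start = |r| \geq \epsilon$ by the assumption $|r| \geq \epsilon$. Second, $s.end - r.start \geq \epsilon$, which is exactly the hypothesis $\theta_{end}(r) = r.start + \epsilon \leq s.end$ rearranged. Hence $l(r,s) \geq \epsilon$, and the preliminary remark gives $(r,s) \in R \bowtie_{\epsilon} S$.

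\emph{Part (ii).} There are two subcases, each giving $l(r,s) < \epsilon$.
\begin{itemize}
\item If $\theta_{start}(r) < s.start$, i.e.\ $s.start > r.end - \epsilon$, then $l(r,s) \leq r.end - s.start < \epsilon$.
\item If $s.end < \theta_{end}(r) = r.start + \epsilon$, then $l(r,s) \leq s.end - r.start < \epsilon$.
\end{itemize}
In either subcase $(r,s) \notin R \bowtie_{\epsilon} S$. This holds whether or not $r$ and $s$ overlap, since pairs outside $R \bowtie S$ are excluded anyway.

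I do not expect a genuine obstacle. The only point needing care is the preliminary remark, namely that part (i) must also verify the overlap predicate of \Cref{def:join}, and that this follows from $\epsilon > 0$. The assumption $|s| \geq \epsilon$ is not used in this argument. The assumption $|r| \geq \epsilon$ is used only in part (i).
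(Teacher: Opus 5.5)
Your proof is correct and follows essentially the same route as the paper. In (i) you use $\max\{r.start,s.start\}=r.start$ and bound both terms of the minimum below by $\theta_{end}(r)$, via $|r|\ge\epsilon$ and the hypothesis on $s.end$; in (ii) you bound $l(r,s)$ above by $r.end-s.start$ or $s.end-r.start$, just as the paper does. Your explicit check that $l(r,s)\ge\epsilon>0$ already forces the overlap predicate of Definition~\ref{def:join} is a detail the paper leaves implicit.
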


\begin{proof}
In the case of (i), we have
\begin{align*}
    l(r,s)  & = \min\{r.end, s.end\} - \max\{r.start, s.start\} \\
            & \ge \theta_{end}(r) - r.start     \\
            & = (r.start + \epsilon) - r.start  \\
            & = \epsilon
\end{align*}
The inequality holds since $s.end \geq \theta_{end}(r)$ and $r.end \geq \theta_{end}(r)$.
In the case of (ii), if $\theta_{start}(r) < s.start$, we have
\begin{align*}
    l(r,s)  & < \min\{r.end, s.end\} - \theta_{start}(r)    \\
            & = \min\{r.end, s.end\} - (r.end - \epsilon)   \\
            & \le r.end - (r.end - \epsilon)                \\
            & = \epsilon
\end{align*}
If $s.end < \theta_{end}(r)$, we have
\begin{align*}
    l(r,s)  & < \theta_{end}(r) - \max\{r.start, s.start\}            \\
            & = (r.start + \epsilon) - \max\{r.start, s.start\} \\
            & \le (r.start + \epsilon) - r.start                \\
            & = \epsilon
\end{align*}
Therefore, this theorem holds.
\end{proof}

\begin{example}
\Cref{fig:theorem} illustrates an example of \Cref{theorem:filter}.
We see that $s_1$ and $s_2$ have case (i).
Also, $s_3$ and $s_4$ fall into case (ii).
Only $s_5$ needs to be evaluated whether $l(r,s_5) \geq \epsilon$ or not.
\end{example}

\begin{figure}
    \centering
    \includegraphics[width=0.99\linewidth]{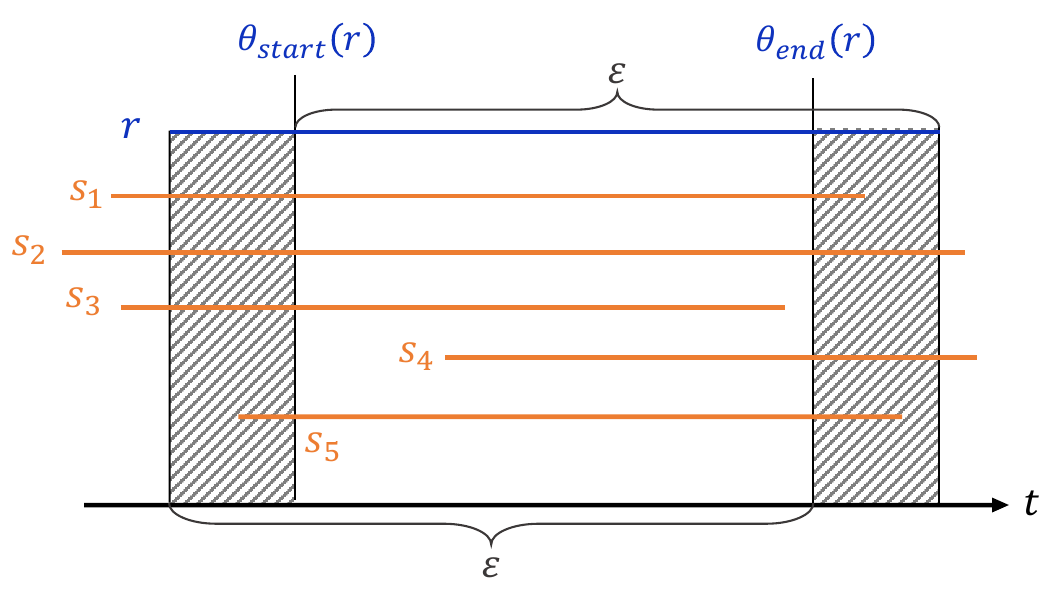}
    \caption{Example of \Cref{theorem:filter}}
    \label{fig:theorem}
\end{figure}

\Cref{theorem:filter} yields several conditions that identify whether $(r,s) \in R \bowtie_{\epsilon} S$ or not, without computing $l(r,s)$.
However, the cost of checking the condition (i) or (ii) is the same as that of computing $l(r,s)$, for given $r$ and $s$.
We thus next need a data structure which, for a given $r \in R$, can find a set $S'$ of intervals $\in S$ falling into case (i) or (ii) \textit{in a batch}.
Specifically, for any $r \in R$, to efficiently find a set of intervals $\in S$ satisfying case (i) or (ii), it is required that $S$ is sorted by start and end points.
In addition, if $(r,s) \in R \bowtie_{\epsilon} S$ and $s$ and $s'$ are similar, $(r,s')$ tends to be in $R \bowtie_{\epsilon} S$.
The data structure, thus, should group similar intervals to avoid redundant interval comparisons.

\begin{figure}
    \centering
    \includegraphics[width=0.65\linewidth]{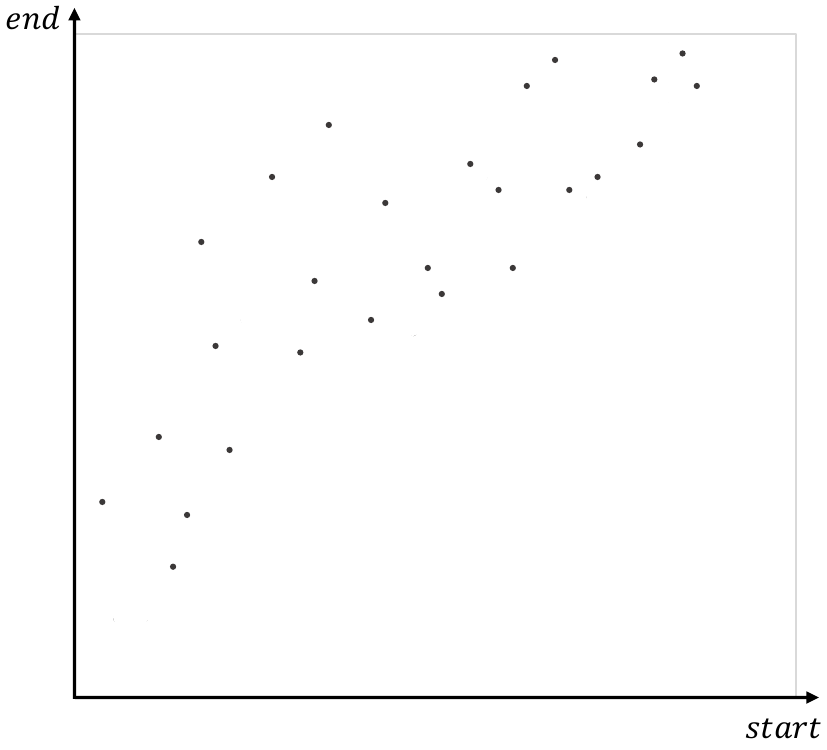}
    \caption{Two-dimensional representation of intervals}
    \label{fig:2d}
\end{figure}

\vs
\noindent
\textbf{Data structure.}
To this end, we employ a two-dimensional grid structure $\mathcal{G}$ to manage $S$.
Note that each interval can be represented as a two-dimensional point; the coordinates of an interval $r$ are $(r.start, r.end)$, and \Cref{fig:2d} illustrates this observation\footnote{Although an interval range query $q$ can be transformed into a 2-sided orthogonal range search $(-\infty, q.end] \times [q.start,\infty)$ on two-dimensional points, this orthogonal range query cannot consider the overlap duration.
Therefore, using a spatial index, such as R-tree and $k$d-tree, is not useful for our problem.}.
Consider $\mathcal{G}$ as a matrix, and each cell of $\mathcal{G}$ is represented as $c_{i,j}$.
We use $S_{i,j}^{start}$ and $S_{i,j}^{end}$ to denote the sets of intervals in $c_{i,j}$ that are respectively sorted based on start and end points (ascending order).
In addition, we maintain
\begin{itemize}
    \setlength{\leftskip}{-2.0mm}
    \item   an array $A_{col}^{max}$ to manage the maximum start point in each column; e.g., $A_{col}^{max}[i]$ is the maximum start point among the intervals in the $i$-th column.
            Notice that $A_{col}^{max}[i] < A_{col}^{max}[i+1]$ for any $i \geq 0$.
\end{itemize}

\noindent
Similarly, for each $i$-th column, we maintain additional arrays:
\begin{itemize}
    \setlength{\leftskip}{-2.0mm}
    \item   $A_{i}^{end,min}$ is an array, where $A_{i}^{end,min}[j]$ maintains the minimum end point among the intervals in $c_{i,j}$.
    \item   $A_{i}^{end,max}$ is an array, where $A_{i}^{end,min}[j]$ maintains the maximum end point among the intervals in $c_{i,j}$.
    \item   $A_{i}^{start,min}$ is an array, where $A_{i}^{start,min}[j]$ is the minimum start point among the intervals in $c_{i,j}$.
\end{itemize}

\vs
\noindent
\textbf{Comparison skipping for $s$ in a batch.}
For a given $r \in R$, these structures support skipping interval comparisons for some intervals $\in S$ \textit{in a batch}, without losing correctness, in the following cases.

\begin{corollary}   \label{corollary:column}
Given an interval $r \in R$,
let $i$ be the minimum column number such that $\theta_{start}(r) \leq A_{col}^{max}[i]$.
Then, for every interval $s$ in $c_{i',j}$ such that $i' > i$ and $j \geq 0$, we have $(r,s) \notin R \bowtie_{\epsilon} S$.
\end{corollary}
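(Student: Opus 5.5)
The plan is to reduce the statement to case (ii) of \Cref{theorem:filter}, using the way the grid $\mathcal{G}$ partitions the start-point axis. The key structural fact is that the columns of $\mathcal{G}$ correspond to consecutive, pairwise disjoint ranges of start points, ordered from left to right; this is also why the paper records $A_{col}^{max}[i] < A_{col}^{max}[i+1]$. I will make this explicit first.

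\emph{Step 1 (column ordering).} I will show that every interval $s$ in any cell $c_{i',j}$ with $i' > i$ satisfies $s.start > A_{col}^{max}[i]$. Each point $(s.start, s.end)$ is assigned to the column whose start range contains $s.start$. The start range of column $i'$ lies entirely to the right of that of column $i$ when $i' > i$. Hence $s.start$ exceeds every start point stored in column $i$, and in particular exceeds their maximum $A_{col}^{max}[i]$. Columns that happen to be empty cause no trouble, because the claim only concerns intervals actually present in column $i'$. This step is the only place where the grid's construction is used, and it is the part I expect to need the most care: the paper has not yet stated the cell-assignment rule, so I will invoke it explicitly (uniform or otherwise monotone column boundaries on the start coordinate).

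\emph{Step 2 (threshold).} By the choice of $i$ we have $\theta_{start}(r) \le A_{col}^{max}[i]$. Combining this with Step 1 gives $\theta_{start}(r) < s.start$ for every $s$ in columns $i' > i$ and every row $j \ge 0$.

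\emph{Step 3 (conclusion).} Case (ii) of \Cref{theorem:filter} now yields $(r,s) \notin R \bowtie_{\epsilon} S$. \Cref{theorem:filter} assumes $|r|, |s| \ge \epsilon$, so to avoid depending on that hypothesis I will also note the direct calculation, which needs no length assumption:
\[
l(r,s) \le r.end - s.start < r.end - \theta_{start}(r) = \epsilon .
\]
This holds for every such $s$, and it works even when $r$ and $s$ do not overlap at all, in which case $l(r,s) \le 0$. It therefore covers the whole collection of cells $c_{i',j}$ with $i' > i$, which proves the corollary.
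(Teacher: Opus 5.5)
Your proof is correct and matches the paper's argument: the paper chains $\theta_{start}(r) \le A_{col}^{max}[i] < \min_{s \in \cup_j c_{i',j}} s.start$, using the same start-axis separation of the grid columns, and then applies case (ii) of \Cref{theorem:filter}. Your direct bound $l(r,s) \le r.end - s.start < \epsilon$ is a small improvement, since the corollary does not state the hypotheses $|r|,|s| \ge \epsilon$ that \Cref{theorem:filter} formally requires, and your bound does not need them.
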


\begin{proof}
In this case, we have
\begin{align*}
    \theta_{start}(r) \leq A_{col}^{max}[i] &= \max_{s \,\in\, \cup_{j}\, c_{i,j}}s.start  \\
                        & < \min_{s \,\in\, \cup_{j}\, c_{i',j}}s.start,
\end{align*}
which falls into case (ii) of \Cref{theorem:filter}.
\end{proof}

\begin{corollary}   \label{corollary:cell_1}
Given an interval $r \in R$, if $\theta_{end}(r) \leq A_{i}^{end,min}[j]$ and $s.start \leq r.start$, for every interval $s$ in $c_{i,j}$, we have $(r,s) \in R \bowtie_{\epsilon} S$.
\end{corollary}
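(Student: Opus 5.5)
The plan is to reduce \Cref{corollary:cell_1} to case (i) of \Cref{theorem:filter}, applied to each interval $s$ in $c_{i,j}$ separately. As with \Cref{theorem:filter}, we assume $|r| \geq \epsilon$ and $|s| \geq \epsilon$. I read the hypothesis ``$s.start \leq r.start$'' as holding for every $s$ in $c_{i,j}$. In practice it would be certified by the cell-level bound $\max_{s \in c_{i,j}} s.start \leq r.start$, for instance when column $i$ lies entirely at or before $r.start$, i.e., $A_{col}^{max}[i] \leq r.start$.

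Fix an arbitrary $s$ in $c_{i,j}$. By the definition of $A_{i}^{end,min}[j]$ as the minimum end point among the intervals in $c_{i,j}$, we have $A_{i}^{end,min}[j] \leq s.end$. Combining this with the hypothesis $\theta_{end}(r) \leq A_{i}^{end,min}[j]$ and using transitivity gives
\begin{equation*}
\theta_{end}(r) \leq A_{i}^{end,min}[j] \leq s.end .
\end{equation*}
Together with $s.start \leq r.start$, both premises of case (i) of \Cref{theorem:filter} hold for the pair $(r,s)$. That case then yields $l(r,s) \geq \epsilon$.

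It remains to check membership in $R \bowtie S$, i.e., that $r$ and $s$ overlap. We have $r.start \leq \theta_{end}(r) \leq s.end$, and also $s.start \leq r.start \leq r.end$. These are exactly the two overlap conditions of \Cref{def:join}, so $(r,s) \in R \bowtie_{\epsilon} S$. Since $s$ was arbitrary in $c_{i,j}$, the claim holds for the whole cell.

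The argument is essentially a one-line monotonicity step, and the only real obstacle is interpretive: making precise that the start-point condition is uniform over the cell. The end-point condition is automatically uniform because it is stated through the cell minimum $A_{i}^{end,min}[j]$. I would state this reading explicitly at the start of the proof and then proceed as above.
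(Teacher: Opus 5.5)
Your proof is correct and takes the same route as the paper: you bound each $s.end$ below by the cell minimum $A_{i}^{end,min}[j]$ to get $\theta_{end}(r) \leq s.end$, and then apply case (i) of \Cref{theorem:filter} together with $s.start \leq r.start$. You also carry over the assumption $|r| \geq \epsilon$, which is genuinely needed since otherwise $l(r,s) \leq |r| < \epsilon$, and add a separate overlap check; the paper leaves both implicit, and they are harmless refinements.
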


\begin{proof}
This case has
$$\theta_{end}(r) \leq A_{i}^{end,min}[j] = \min_{s \,\in\, c_{i,j}}s.end.$$
As $s.start \leq r.start$, this case falls into case (i) of \Cref{theorem:filter}.
\end{proof}

\begin{corollary}   \label{corollary:cell_2}
Given an interval $r \in R$, if $A_{i}^{end,max}[j] < \theta_{end}(r)$, for every interval $s$ in $c_{i,j}$, we have $(r,s) \notin R \bowtie_{\epsilon} S$.
\end{corollary}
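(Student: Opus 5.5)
The plan is to reduce Corollary~\ref{corollary:cell_2} to the second alternative of case (ii) of Theorem~\ref{theorem:filter}, in the same way that Corollaries~\ref{corollary:column} and~\ref{corollary:cell_1} were reduced to it. The only information needed about the cell $c_{i,j}$ is that $A_{i}^{end,max}[j]$ is, by construction, the maximum end point among the intervals stored in $c_{i,j}$. (The text labels this array's entry as $A_{i}^{end,min}[j]$, which we read as a typo for $A_{i}^{end,max}[j]$.)

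First, fix an arbitrary interval $s$ in $c_{i,j}$. By the definition of the maximum we have $s.end \le \max_{s' \in c_{i,j}} s'.end = A_{i}^{end,max}[j]$. Combining this with the hypothesis $A_{i}^{end,max}[j] < \theta_{end}(r)$ gives the strict inequality $s.end < \theta_{end}(r)$. This is exactly the second disjunct of condition (ii) in Theorem~\ref{theorem:filter}, so the theorem yields $(r,s) \notin R \bowtie_{\epsilon} S$. Since $s$ was arbitrary, the conclusion holds for every interval in $c_{i,j}$, and this is the batch statement.

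If one prefers not to rely on the standing assumption $|r|,|s| \ge \epsilon$ of Theorem~\ref{theorem:filter}, a direct computation works just as well and is really the heart of the argument. Using $\min\{r.end,s.end\} \le s.end$ and $\max\{r.start,s.start\} \ge r.start$, we get
\[
l(r,s) \le s.end - r.start < \theta_{end}(r) - r.start = \epsilon .
\]
This covers both possibilities: if $(r,s)$ does not overlap it is not in $R \bowtie S$ at all, and if it does overlap its duration falls short of $\epsilon$.

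I expect no real obstacle here. The only point needing care is the direction of the inequality: we must use the cell \emph{maximum} of the end points, so that the bound transfers to every member of the cell, and we must keep the inequality strict so that it matches the strict condition in case (ii).
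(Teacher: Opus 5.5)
Your proof is correct and takes the paper's approach: the paper also notes that every $s$ in $c_{i,j}$ satisfies $s.end \le A_{i}^{end,max}[j] < \theta_{end}(r)$, then applies the second disjunct of case (ii) of Theorem~\ref{theorem:filter}. Your extra direct bound $l(r,s) \le s.end - r.start < \epsilon$ is also valid and reaches the same conclusion without the standing length assumptions.
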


\begin{proof}
This case has
$$A_{i}^{end,max}[i] = \max_{s \,\in\, c_{i,j}}s.end < \theta_{end}(r)$$
which falls into case (ii) of \Cref{theorem:filter}.
\end{proof}

\begin{corollary}   \label{corollary:cell_3}
Given an interval $r \in R$, if $\theta_{start}(r) < A_{i}^{start,min}[j]$, for every interval $s$ in $c_{i,j}$, we have $(r,s) \notin R \bowtie_{\epsilon} S$.
\end{corollary}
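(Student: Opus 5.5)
The plan is to mirror the proofs of \Cref{corollary:cell_2} and \Cref{corollary:column}. We reduce the claim to case (ii) of \Cref{theorem:filter}, using the definition of $A_{i}^{start,min}[j]$ as the minimum start point among the intervals stored in $c_{i,j}$.

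First, fix an arbitrary interval $s$ in $c_{i,j}$. By the definition of the array, $A_{i}^{start,min}[j] = \min_{s' \in c_{i,j}} s'.start \leq s.start$. Combining this with the hypothesis $\theta_{start}(r) < A_{i}^{start,min}[j]$ gives
\begin{align*}
    \theta_{start}(r) < A_{i}^{start,min}[j] = \min_{s' \,\in\, c_{i,j}} s'.start \leq s.start,
\end{align*}
so $\theta_{start}(r) < s.start$. This is exactly the first disjunct of case (ii) of \Cref{theorem:filter}, which yields $(r,s) \notin R \bowtie_{\epsilon} S$. Since $s$ was arbitrary, the conclusion holds for every interval in the cell. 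If the cell is empty, the statement holds vacuously, with the usual convention that the minimum over an empty set is $+\infty$.

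The only point needing care is the standing hypothesis $|r| \geq \epsilon$, $|s| \geq \epsilon$ of \Cref{theorem:filter}. I would note, as the earlier corollaries implicitly do, that this assumption is in force. In fact, the derivation in case (ii) does not use it: $l(r,s) < \min\{r.end, s.end\} - (r.end - \epsilon) \leq \epsilon$ holds unconditionally. So there is no real obstacle, and the argument is a one-line chain of inequalities.
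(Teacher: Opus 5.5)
Your proposal is correct and matches the paper's proof: the paper likewise writes $\theta_{start}(r) < A_{i}^{start,min}[j] = \min_{s \in c_{i,j}} s.start$ and concludes by case (ii) of Theorem~\ref{theorem:filter}. Your extra remark is also right: case (ii) does not need the hypotheses $|r| \geq \epsilon$ and $|s| \geq \epsilon$, because $\max\{r.start, s.start\} \geq s.start > r.end - \epsilon$ already forces $l(r,s) < \epsilon$.
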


\begin{proof}
This case has
$$\theta_{start}(r) < A_{i}^{start,min}[j] = \min_{s \,\in\, c_{i,j}} s.start,$$
which falls into case (ii) of \Cref{theorem:filter}.
\end{proof}

\noindent
We exploit these observations to reduce unnecessary interval comparisons.
Note that our algorithm can employ an arbitrary grid structure.
In our experiments, we use the state-of-the-art one \cite{ceccarello2025indexing}.

\vs
\noindent
\textbf{Pre-processing time.}
Assume that $\mathcal{G}$ is built on $S$.
Mapping an interval $\in S$ to the corresponding cell needs $O(1)$ time, so building $\mathcal{G}$ of $S$ takes $O(m)$ time.
Note that we do not create empty cells, thus the number of cells is bounded by $m$.
This means that the size of each array is also bounded by $m$.
Therefore, making the four arrays needs $O(m\log m)$ time.
In total, our pre-processing time is $O(m\log m)$.

\vs
\noindent
\textbf{Space complexity.}
From the above pre-processing time, the space complexity of $\mathcal{G}$ is $O(m)$.
Again, the size of each array is also bounded by $m$.
The space complexity of our data structures is hence $O(m)$.

\vs
\noindent
\textbf{Update.}
Although we do not assume (frequent) updates, our data structures are easy to update.
Given a new interval, we map it to the corresponding cell.
Assume that this cell is $c_{i,j}$.
If this insertion affects $A_{i}^{end,min}[j]$, $A_{i}^{end,max}[j]$, or $A_{i}^{start,min}[j]$, we update the corresponding array.
Furthermore, if this new interval has the maximum start point in the $i$-th column, we update $A_{col}^{max}[i]$.
If an existing interval is removed from $S$, we do similar updates to the above insertion.

\vs
\noindent
\textbf{Algorithm description.}
Our join algorithm employs a nested-loop join approach.
Assume that $|R| \leq |S|$; otherwise, we swap them.
For each $r \in R$, we run \Cref{alg:overlap_query}.
If $|r| < \epsilon$, as it cannot satisfy the overlap duration constraint, we skip this interval.
Otherwise, we set $\theta_{start}(r)$ and $\theta_{end}(r)$.
We then run a binary search on $A_{col}^{max}$ to find the first column number, $idx$, that has $r.start \leq A_{col}^{max}[\cdot]$.
Also, we run a binary search on $A_{col}^{max}$ to find the first column number, $idx'$, that has $\theta_{start}(r) \leq A_{col}^{max}[\cdot]$.
We do not access the $idx' +\alpha$ ($\alpha \geq 1$) column of $\mathcal{G}$ from Corollary \ref{corollary:column}.

Consider each $i$-th column of $\mathcal{G}$ where $i \in [0, idx-1]$.
This column guarantees that $s.start < r.start$ for any $s$ in $\bigcup_j c_{i,j}$ since $A_{col}^{max}[i] < r.start$.
For a cell $c_{i,j}$ of $\mathcal{G}$, if $A_{i}^{end,max}[j] < \theta_{end}(r)$, all intervals in $c_{i,j}$ are not included in $R \bowtie_{\epsilon} S$ from Corollary \ref{corollary:cell_2}.
Hence, we focus on cell $c_{i,j}$ such that $\theta_{end}(r) \leq A_{i}^{end,max}[j]$.
This can be achieved by running a binary search on $A_{i}^{end,max}$.
If $\theta_{end}(r) \leq A_{i}^{end,min}[j]$, from Corollary \ref{corollary:cell_1}, all intervals in $c_{i,j}$ are included in $R \bowtie_{\epsilon} S$, so they are added into the result without comparisons.
If $\theta_{end}(r) > A_{i}^{end,min}[j]$, we add all intervals $s \in c_{i,j}$ such that $\theta_{end}(r) \leq s.end$ without comparisons.

Next, consider each $i$-th column of $\mathcal{G}$ where $i \in [idx, idx' - 1]$.
We run similar operations to the above case.
If $\theta_{start}(r) < A_{i}^{start,min}[j]$, from Corollary \ref{corollary:cell_3}, all intervals in $c_{i,j}$ are not included in $R \bowtie_{\epsilon} S$, thereby we skip this cell.
Otherwise, we add all intervals in $S_{i,j}^{start}$ to the result if $r.end \leq A_{i}^{end,min}[j]$.
In the other cases, we compute $l(r,s)$ for each $s \in S_{i,j}$ and add $(r,s)$ iff $l(r,s)) \geq \epsilon$.

Last, given the $idx'$-th column, we do almost the same operations as above.
One difference is that this column does not have the case of $r.end \leq A_{i}^{end,min}[j]$.

\begin{algorithm}[!t]
    \caption{Proposed Algorithm}    \label{alg:overlap_query}
    \DontPrintSemicolon
    \KwIn{$\mathcal{G}$, $r$, and $\epsilon$}
    \KwOut{$J_{r} = r \bowtie_\epsilon S$}
    \If {$|r| < \epsilon$}{
        \textbf{return}
    }
    $J_{r} \gets \varnothing$\;                     \label{algo:overlap_b}
    $\theta_{start}(r) \gets r.end - \epsilon$\;
    $\theta_{end}(r) \gets r.start + \epsilon$\;
    $idx \gets$ \textsc{Find-Index}$(A_{col}^{max}, r.start)$\;
    $idx' \gets$ \textsc{Find-Index}$(A_{col}^{max}, \theta_{start}(r))$\;
    \For {$i \gets 0$ to $idx - 1$}{
        $j \gets$ \textsc{Find-Index}$(A_{i}^{end,max}, \theta_{end}(r))$\;
        \While{$j \leq |A_{i}^{end,max}|$}{
            $I \gets \varnothing$\;
            \eIf {$\theta_{end}(r) \leq A_{i}^{end,min}[j]$}{
                $I \gets S^{end}_{i,j}$
            }{
                $I \gets \{s \in S^{end}_{i,j} \,|\, \theta_{end}(r) \leq s.end\}$
            }
            \ForEach{$s \in I$}{
                $J_{r} \gets J_{r} \cup \{(r,s)\}$
            }
            $j \gets j + 1$                                                                                             \label{algo:overlap_e}
        }
    }
    \For {$i \gets idx$ to $idx'$}{
        $j \gets$ \textsc{Find-Index}$(A_{i}^{end,max}, \theta_{end}(r))$\;
        \While{$j \leq |A_{i}^{end,max}|$}
        {
            \If {$\theta_{start}(r) < A_{i}^{start,min}[j]$}{
                $j \gets j + 1$\;
                \textbf{continue}
            }
            \eIf{$i < idx'$}{
                \eIf{$r.end \leq A_{i}^{end,min}[j]$}{
                    $J_{r} \gets J_{r} \cup \{(r,s)\}$, $\forall s \in S^{start}_{i,j}$
                }{
                    \ForEach{$s \in S^{start}_{i,j}$ s.t. $l(r,s) \geq \epsilon$}{  \label{algo:comparison_b}
                        $J_{r} \gets J_{r} \cup \{(r,s)\}$                          \label{algo:comparison_e}
                    }
                }
            }{
                Run lines \ref{algo:comparison_b}--\ref{algo:comparison_e}          \label{algo:comparison}
            }
            $j \gets j + 1$
        }
    } 
\end{algorithm}

\begin{example} \label{example:algorithm}
\Cref{fig:algorithm} depicts an example of \Cref{alg:overlap_query} assuming that $r$, such that $|r| \geq \epsilon$, is given.
After identifying $idx$ and $idx'$, this algorithm first accesses the 0-th column.
It runs a binary search on $A_{0}^{end,max}$, and accesses cells in order of $c_{0,3} \rightarrow c_{0,4} \rightarrow c_{0,5}$.
Each yellow cell assumes the case of $\theta_{end}(r) \leq s.end$, so it adds all intervals $s \in c_{0,3}$ such that $\theta_{end}(r) \leq s.end$ without comparisons.
Each orange cell assumes the case of $\theta_{end}(r) \leq A_{i}^{end,min}[j]$, so it adds all intervals $\in c_{0,4} \cup c_{0,5}$ without comparisons.
At the $idx$-th column, the blue cell assumes the case of $r.end \leq A_{i}^{end,min}[j]$.
Hence, all intervals in $S_{idx,3}^{start}$ are added to the result.
The gray cell assumes the case of $\theta_{start}(r) < A_{i}^{start,min}[j]$, so $c_{idx,4}$ is skipped.
At the $idx'$-th column, the green cells assume that the other cases, i.e., it computes $l(r,s)$ for every interval in $c_{i,j}$.
\end{example}

\begin{figure}
    \centering
    \includegraphics[width=0.80\linewidth]{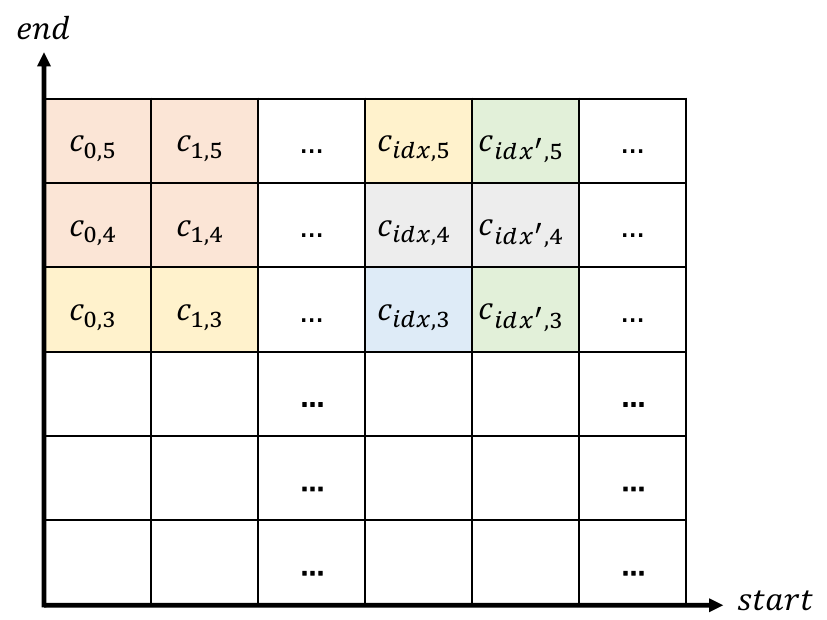}
    \caption{Illustration of Example \ref{example:algorithm}}
    \label{fig:algorithm}
\end{figure}

\subsection{Optimization}   \label{sec:proposal:opt}
Although \Cref{alg:overlap_query} exploits \Cref{theorem:filter} and its corollaries to reduce computing $l(r,s)$ for many intervals $\in S$, at lines \ref{algo:comparison_b} and \ref{algo:comparison}, it computes $l(r,s)$ for each $s \in c_{i,j}$.
We next reduce this computational cost by introducing additional pruning rules.

\begin{definition}
Given a cell $c_{i,j}$ of $\mathcal{G}$ and an overlap duration constraint $\epsilon$, $\theta_{min}(c_{i,j})$ and $\theta_{max}(c_{i,j})$ are respectively defined as follows:
\begin{align*}
    \theta_{min}(c_{i,j})   &= A_{i}^{end,min}[j] - \epsilon   \\
    \theta_{max}(c_{i,j})   &= A_{i}^{end,max}[j] - \epsilon
\end{align*}
\end{definition}

\noindent
This definition yields the following pruning rules:

\begin{theorem} \label{theorem:filter2}
Given an interval $r \in R$ and an overlap duration constraint $\epsilon$, consider $c_{i,j}$ such that $\theta_{end}(r) \leq A_{i}^{end,min}[j]$.
(i) If $s.start \leq r.start$ or $r.start < s.start \leq \theta_{min}(c_{i,j}) \leq \theta_{start}(r)$, we have $(r,s) \in R \bowtie_{\epsilon} S$.
(ii) If $\min \{\theta_{max}(c_{i,j}), \theta_{start}(r)\} < s.start$, we have $(r,s) \notin R \bowtie_{\epsilon} S$.
\end{theorem}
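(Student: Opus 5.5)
The plan is to fix an arbitrary $s \in c_{i,j}$ and reduce every claim to a bound on $l(r,s)$. I will use only three facts: the hypothesis $\theta_{end}(r) \leq A_{i}^{end,min}[j]$, the definitions of $\theta_{min}(c_{i,j})$ and $\theta_{max}(c_{i,j})$, and the cell bounds $A_{i}^{end,min}[j] \leq s.end \leq A_{i}^{end,max}[j]$, which hold for every $s$ in $c_{i,j}$. Combining the hypothesis with the lower cell bound gives $\theta_{end}(r) \leq s.end$ for every $s$ in the cell. As in \Cref{alg:overlap_query}, I also assume $|r| \geq \epsilon$, since otherwise $r$ is skipped.

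For (i), I will split on the two disjuncts. If $s.start \leq r.start$, then together with $\theta_{end}(r) \leq s.end$ this is exactly case (i) of \Cref{theorem:filter}, so nothing new is needed. If instead $r.start < s.start \leq \theta_{min}(c_{i,j}) \leq \theta_{start}(r)$, then $\max\{r.start, s.start\} = s.start$, so $l(r,s) = \min\{r.end - s.start,\ s.end - s.start\}$, and I will bound each term by $\epsilon$.
\begin{itemize}
\item For the first term, $s.start \leq \theta_{start}(r) = r.end - \epsilon$ gives $r.end - s.start \geq \epsilon$.
\item For the second term, $s.start \leq \theta_{min}(c_{i,j}) = A_{i}^{end,min}[j] - \epsilon \leq s.end - \epsilon$ gives $s.end - s.start \geq \epsilon$.
\end{itemize}
Hence $l(r,s) \geq \epsilon > 0$, so $r$ and $s$ overlap and $(r,s) \in R \bowtie_{\epsilon} S$.

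For (ii), the hypothesis $\min\{\theta_{max}(c_{i,j}), \theta_{start}(r)\} < s.start$ means that either $\theta_{start}(r) < s.start$ or $\theta_{max}(c_{i,j}) < s.start$.
\begin{itemize}
\item If $\theta_{start}(r) < s.start$, this is case (ii) of \Cref{theorem:filter}. Alternatively, I can argue directly that $l(r,s) \leq r.end - s.start < \epsilon$.
\item If $\theta_{max}(c_{i,j}) < s.start$, then $s.end \leq A_{i}^{end,max}[j] = \theta_{max}(c_{i,j}) + \epsilon < s.start + \epsilon$. Since $l(r,s) \leq s.end - s.start$, this gives $l(r,s) < \epsilon$.
\end{itemize}
In both subcases the pair is excluded.

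The only delicate point is that \Cref{theorem:filter} was stated under the assumption $|s| \geq \epsilon$. The second subcase of (ii) in fact forces $|s| < \epsilon$, so it cannot be routed through \Cref{theorem:filter}. For that reason I will argue that subcase, and the second disjunct of (i), directly from the definition of $l(r,s)$ rather than through the earlier theorem. The remaining steps are routine chains of inequalities.
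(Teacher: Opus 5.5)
Your proposal is correct and matches the paper's proof. Both arguments reduce the first disjunct of (i) and the $\theta_{start}(r) < s.start$ subcase of (ii) to \Cref{theorem:filter}. Both then bound $l(r,s)$ directly from $A_{i}^{end,min}[j] \leq s.end \leq A_{i}^{end,max}[j]$ in the other two cases.

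Your explicit assumption $|r| \geq \epsilon$, which the paper relies on only implicitly through \Cref{theorem:filter}, is genuinely needed for the first disjunct of (i). In that same case, the $|s| \geq \epsilon$ hypothesis of \Cref{theorem:filter}, which you did not check there, holds automatically: $s.end \geq r.start + \epsilon \geq s.start + \epsilon$.
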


\begin{proof}
Assume that $\theta_{end}(r) \leq A_{i}^{end,min}[j]$ for cell $c_{i,j}$.
In the case of (i), when $s.start \leq r.start$, $l(r,s) \geq \epsilon$ from \Cref{theorem:filter}.
When $r.start < s.start \leq \theta_{min}(c_{i,j}) \leq \theta_{start}(r)$,
\begin{align*}
    l(r,s)  & = \min\{r.end, s.end\} - \max\{r.start, s.start\} \\
            & \ge \min\{r.end, A_{i}^{end,min}[j]\} - \theta_{min}(c_{i,j}) \\
            & = A_{i}^{end,min}[j] - (A_{i}^{end,min}[j] - \epsilon) \\
            & = \epsilon
\end{align*}
Note that $\theta_{min}(c_{i,j}) = A_{i}^{end,min}[j] - \epsilon \leq \theta_{start}(r) = r.end - \epsilon$, so we have $\min\{r.end, A_{i}^{end,min}[j]\} = A_{i}^{end,min}[j]$.
In the case of (ii), when $\theta_{start}(r) < \theta_{max}(c_{i,j})$, $l(r,s) < \epsilon$ from \Cref{theorem:filter}.
When $\theta_{max}(c_{i,j}) \leq \theta_{start}(r) < s.start$, because $\theta_{max}(c_{i,j}) = A_{i}^{end,max}[j] - \epsilon < \theta_{start} = r.end - \epsilon$, we have
\begin{align*}
    l(r,s)  & = \min\{r.end, s.end\} - \max\{r.start, s.start\} \\
            & < A_{i}^{end,max}[j]  - \theta_{max}(c_{i,j})  \\
            & = A_{i}^{end,max}[j] - (A_{i}^{end,max}[j] - \epsilon) \\
            & = \epsilon
\end{align*}
Therefore, this theorem holds.
\end{proof}

\begin{figure}
    \centering
    \includegraphics[width=0.99\linewidth]{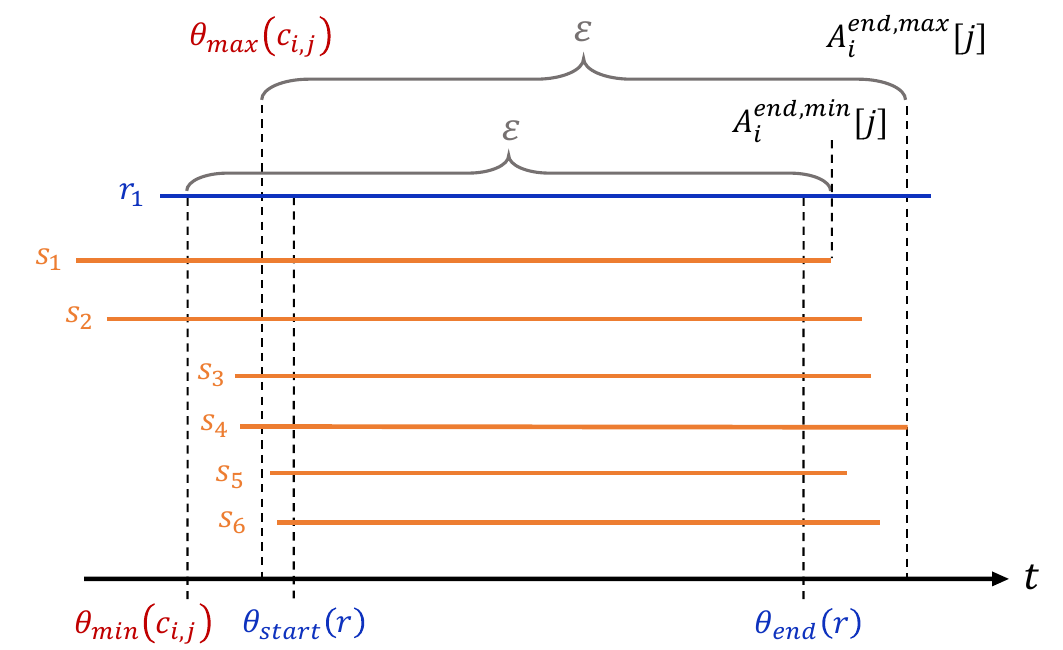}
    \caption{Example of \Cref{theorem:filter2}}
    \label{fig:theorem2}
\end{figure}

\begin{example}
Assume $c_{i,j}$ having lines \ref{algo:comparison_b} and \ref{algo:comparison} of \Cref{alg:overlap_query}.
\Cref{fig:theorem2} describes this case.
We see that $s_1$ and $s_2$ have case (i).
Also, $s_5$ and $s_6$ have case (ii) as they satisfy $\theta_{max}(c_{i,j}) < s.start$.
Therefore, in this example, only $s_3$ and $s_4$ require computing $l(r,s)$.
\end{example}

\noindent
\textbf{Algorithm description.}
\Cref{alg:overlap_query_opt} extends \Cref{alg:overlap_query} by incorporating \Cref{theorem:filter2}.
The difference from \Cref{alg:overlap_query} is lines \ref{algo2:cell_threshold_b}--\ref{algo2:cell_threshold_e}.
After computing $\theta_{min}(c_{i,j})$, \Cref{alg:overlap_query_opt} applies case (i) of \Cref{theorem:filter2}, and every interval $s$ in $c_{i,j}$ satisfying this case is included in the join result without computing $l(r,s)$.
Next, after computing $\theta_{max}(c_{i,j})$, \Cref{alg:overlap_query_opt} applies case (ii) of \Cref{theorem:filter2} to filter intervals not included in the result.
Specifically, it computes $l(r,s)$ for each $s \in S_{i,j}^{start}$ that does not satisfy cases (i) and (ii), and once the current $s$ satisfies case (ii), scanning $S_{i,j}^{start}$ is terminated.

\begin{algorithm}[!t]
    \caption{Proposed Algorithm with Optimization}    \label{alg:overlap_query_opt}
    \DontPrintSemicolon
    \KwIn{$\mathcal{G}$, $r$, and $\epsilon$}
    \KwOut{$J_{r} = r \bowtie_\epsilon S$}
    Run lines \ref{algo:overlap_b}--\ref{algo:overlap_e} of \Cref{alg:overlap_query}\;
    \For {$i \gets idx$ to $idx'$}{
        $j \gets$ \textsc{Find-Index}$(A_{i}^{end,max}, \theta_{end}(r))$\;
        \While {$j \leq |A_{i}^{end,max}|$}{
            \If {$A_{i}^{start,min}[j] >\theta_{start}(r)$}{
                $j \gets j + 1$, \textbf{continue}
            }
            \eIf {$i < idx'$}{
                \eIf {$r.end < A_{i}^{end,min}[j]$}{
                    $J_{r} \gets J_{r} \cup \{(r,s)\}$, $\forall s \in S^{st}_{i,j}$
                }{
                    $\theta_{min}(c_{i,j}) \gets A_{i}^{end,min}[j]- \epsilon$\;                        \label{algo2:cell_threshold_b}
                    $\lambda \gets \min\{\max\{r.start, \theta_{min}(c_{i,j})\}, \theta_{start}(r)\}$\;
                    $s \gets$ the first interval in $S^{start}_{i,j}$\;
                    \While{$s.start \le \lambda$}{                                                      \label{algo2:no_comparison_b}
                        $J_{r} \gets J_{r} \cup \{(r,s)\}$\;
                        $s \gets$ next interval in $S^{start}_{i,j}$\;                                  \label{algo2:no_comparison_e}
                    }
                    $\theta_{max}(c_{i,j}) \gets A_{i}^{end,max}[j] - \epsilon$\;                       \label{algo2:calc_cell_threshold_2}
                    $\lambda \gets \min\{\theta_{max}(c_{i,j}), \theta_{start}(r)\}$\;
                    \While {$s.start \le \lambda$}{                                                     \label{algo2:comparison_b}
                        \If {$l(r,s) \geq \epsilon$}{
                            $J_{r} \gets J_{r} \cup \{(r,s)\}$\;
                        }
                        $s \gets$ next interval in $S^{st}_{i,j}$\;                                     \label{algo2:cell_threshold_e}
                    }
                }
            }{
                Run lines \ref{algo2:cell_threshold_b}--\ref{algo2:cell_threshold_e}
            }
            $j \gets j + 1$
        }
    }
\end{algorithm}

\subsection{Batch Processing}  \label{sec:proposal:batch}
\Cref{alg:overlap_query,alg:overlap_query_opt} employ a nested-loop approach and handle each $r \in R$ one by one.
This approach may incur redundant computations.
For example, if $r_1$ and $r_2$ have similar start and end points, computing $r_1 \bowtie_{\epsilon} S$ and $r_2 \bowtie_{\epsilon} S$ tends to share a similar interval access pattern.
Therefore, we consider an approach that groups some intervals $\in R$, say $\{r_a, ..., r_b\}$.
Assume that each $r \in \{r_a, ..., r_b\}$ is similar to each other.
For a given cell $c_{i,j}$, if $l(r_a, s) \geq \epsilon$ or $l(r_a, s) < \epsilon$ for every $s$ in $c_{i,j}$, any $r \in \{r_a, ..., r_b\}$ also has such a case in expectation.
If so, we can further skip interval comparisons.
To maximize the improvement by this approach, it is important to consider how to make groups.

We propose the following approach to make groups.

\begin{definition}  \label{definition:group}
Given an interval $r \in R$, all intervals in the group of $r$ satisfy the following rules:
\begin{itemize}
    \setlength{\leftskip}{-2.0mm}
    \item   $r$ and $r'$ can be mapped into the same cell of $\mathcal{G}$.
    \item   $r.start \leq r'.start \leq r.start + \gamma$, where $\gamma \geq 0$ is a hyper-parameter.
    \item   Let $A_{col}^{max}[j]$ be the minimum value that satisfies $\theta_{start}(r) \leq A_{col}^{max}[j]$.
            Then, $r.end \leq r'.end \leq \min\{r.end + \gamma,  A_{col}^{max}[j] + \epsilon\}$.
\end{itemize}
\end{definition}

\begin{figure}
    \centering
    \includegraphics[width=0.85\linewidth]{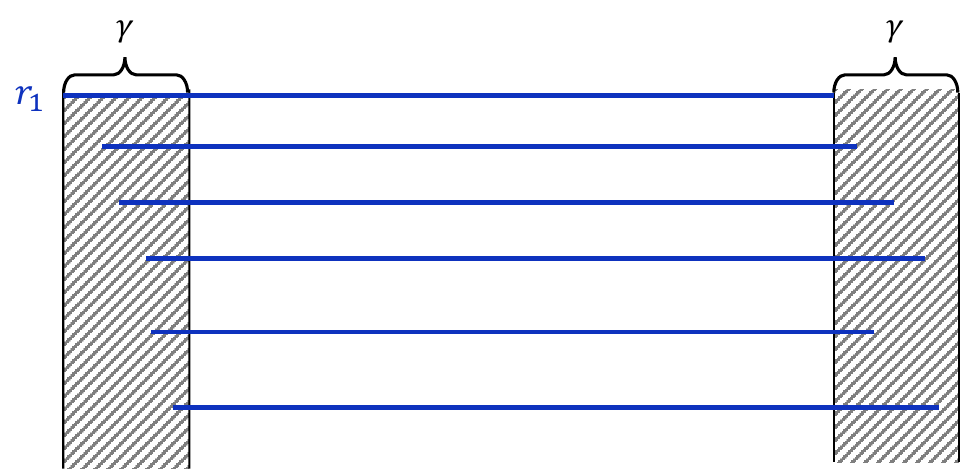}
    \caption{Example of making a group of $r_1 \in R$}
    \label{fig:group}
\end{figure}

\noindent
\Cref{fig:group} illustrates a brief example of making a group of $r_1 \in R$.

To minimize the interval comparisons, we should minimize the number of groups, i.e., similar intervals should be placed in the same group as much as possible.
However, making such groups is NP-hard.

\begin{theorem} \label{theorem:np-hard}
The problem of minimizing the number of groups satisfying Definition \ref{definition:group} is NP-hard.
\end{theorem}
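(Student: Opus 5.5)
We prove NP-hardness by a polynomial-time reduction from a known NP-hard covering problem, casting the grouping problem as covering $R$ with the fewest ``groups.'' A group is determined by a representative $r$ together with the set of $r'$ admissible for $r$ under Definition~\ref{definition:group}. Read in the two-dimensional representation of Fig.~\ref{fig:2d}, the admissible $r'$ are exactly the points $(r'.start, r'.end)$ that lie inside a window anchored at the point of $r$. Horizontally the window spans $[r.start, r.start+\gamma]$. Vertically it spans $[r.end, \min\{r.end+\gamma, A_{col}^{max}[j]+\epsilon\}]$, further intersected with the cell of $r$.

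Minimizing the number of groups is therefore a covering problem with squares anchored at their lower-left corner. The candidate anchors are the input points themselves, since each group is generated by one of its members. We would reduce from the \emph{minimum geometric set cover / discrete unit-square cover} problem, which is NP-hard. More simply, we could reduce from \emph{minimum dominating set} in unit-square graphs or planar graphs of bounded degree, in the form of the NP-hard problem of covering points by axis-parallel unit squares whose corners lie at input points. We would commit to the latter.

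The construction is as follows. Given an instance consisting of a point set $P$ and budget $k$, set $R=P$, interpreting each point $(x,y)$ as an interval with $x<y$; this is achieved by shifting all $y$-coordinates up by a large constant. Set $\gamma=1$ and choose $\epsilon$ small. We then build $S$ with three properties. First, the grid $\mathcal{G}$ has a single huge cell containing all of $R$, so the first rule is vacuous. Second, $A_{col}^{max}$ is arranged so that the cap $A_{col}^{max}[j]+\epsilon$ is never binding; for instance, take $S$ to consist of one column whose maximum start exceeds every $r.end$. Third, $|r|\ge\epsilon$ holds for all $r$. With these choices a group anchored at $r$ is exactly $P\cap([x,x+1]\times[y,y+1])$. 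Hence groups correspond to anchored unit squares, and $R$ admits a partition into at most $k$ groups iff $P$ can be covered by at most $k$ squares anchored at points of $P$. Overlapping covers can be made disjoint by assigning each point to one covering square, and every subset of a group is still a valid group, since the rules are conditions on each member $r'$ relative to $r$. The reduction is clearly polynomial.

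The main obstacle is certifying that the source problem is NP-hard in precisely this anchored, corner-at-input-point form. Standard results concern unit squares placed freely or centered at input points. We would first try to adapt the classical reduction for unit-disk/unit-square cover from planar 3-SAT (or from vertex cover in planar cubic graphs). In that reduction, variables and clauses become chains of points spaced so that only corner-anchored squares at gadget points can cover consecutive pairs, and a cover of size exactly $k$ forces a consistent alternating choice along each chain. If that becomes too delicate, our fallback is to exploit the asymmetry of the anchoring: place the points along a monotone staircase so that the problem becomes an interval-style choice per chain, and encode clauses through shared points where chains meet.
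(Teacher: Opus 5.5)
\textbf{There is a genuine gap: the one claim that carries all the hardness is not established.}

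Your reduction scaffolding is sound, and it is the same restriction the paper uses. You make rule~1 vacuous and make the cap $\min\{r.end+\gamma, A_{col}^{max}[j]+\epsilon\}$ equal to $r.end+\gamma$. Then a group of $r$ is exactly the set of input points in $[r.start,r.start+\gamma]\times[r.end,r.end+\gamma]$, and minimizing groups becomes covering $R$ by the fewest squares $a+[0,\gamma]^2$ with anchors $a\in R$.

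Two small points on the setup. Your single-cell grid neutralizes rule~1 more faithfully than the paper's single-column assumption. For the cap, you need $A_{col}^{max}[0]\ge \max_{r} r.end+\gamma-\epsilon$, not merely $A_{col}^{max}[0]>\max_r r.end$.

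The gap is that you never establish that this corner-anchored, self-covering square cover problem is NP-hard, and none of the problems you name is this one:
\begin{itemize}
\item Domination in unit-square graphs uses windows $a+[-1,1]^2$ centered at the points, which is a symmetric relation.
\item Discrete unit square cover allows candidate squares that are arbitrary and independent of the points that must be covered.
\item Fowler et al.~\cite{fowler1981optimal} place squares freely. A normalized optimal free square has its left and bottom edges at coordinates of two possibly different points, so its corner is generally not an input point.
\end{itemize}
Your last paragraph is a research plan, not an argument, so the proof as written is incomplete.

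The paper's own proof closes this step only by declaring the anchored problem equivalent to geometric set cover and citing \cite{fowler1981optimal}. Observing that your special case is an instance of a more general NP-hard problem is the wrong direction, so you cannot borrow that step. You are right to distrust it, but you still need an actual reduction \emph{into} the anchored problem, or a citation for exactly this variant.

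Be aware that the asymmetry works against the gadgets you sketch:
\begin{itemize}
\item A point $a$ covers $p\neq a$ only if $a<p$ componentwise, so the cover relation is acyclic.
\item Every minimal point is therefore a forced anchor.
\item In isolation, an antichain must be selected entirely.
\item In isolation, an increasing staircase is solved optimally by a greedy sweep from its forced minimal element, because each anchor covers a contiguous run of the chain.
\end{itemize}
So the alternating two-phase cycles behind unit-disk and grid-graph domination reductions have no direct analogue here, and your ``monotone staircase'' fallback offers no free binary choice on its own. Any variable gadget must get its freedom from anchors outside a chain that may or may not cover its head. Clause and crossing gadgets then have to be designed around this forcing behavior before the theorem can be considered proved.
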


\begin{proof}
We prove this theorem by showing that an instance of this problem includes an NP-hard problem.
Assume that every interval $r \in R$ is mapped to one of the $i$-th column cells, which satisfies the first rule in Definition \ref{definition:group}.
Also, assume that $\epsilon$ is sufficiently large so that $\min\{r.end + \gamma,  A_{col}^{max}[j] + \epsilon\}\min\{r.end + \gamma,  A_{col}^{max}[j] + \epsilon\} =r.end+\gamma$.
Then, let the group of $r$ be $G(r)$, and $r'$ belongs to $G(r)$ iff
\begin{enumerate}
    \item   $r.start \leq r'.start \leq r.start + \gamma$, and
    \item   $r.end \leq r'.end \leq r.end + \gamma$.
\end{enumerate}
Consider that $r$ is a 2-dimensional point of $(r.start,r.end)$.
We can draw a square $\mathcal{S}(r)$ of length $\gamma$ whose left-bottom point is $r$.
If $G(r)$ contains $r'$, $r'$ is inside $\mathcal{S}(r)$ geometrically.

Hence, the problem stated in \Cref{theorem:np-hard} is to partition $R$ into disjoint subsets $G(r_i)$, ..., $G(r_j)$ while minimizing the number of subsets.
From the above geometric representation, this problem is equivalent to the problem of geometric set cover, which is NP-hard \cite{fowler1981optimal}.
Now we see that an instance of our problem contains the NP-hard problem, thereby this theorem holds.
\end{proof}

\noindent
\textbf{Making batches/groups.}
Since \Cref{theorem:np-hard} proves that an optimal set of groups is hard to obtain efficiently in practice, we propose a heuristic approach.
We sort the intervals of $R$ by start point in ascending order.
We make groups (batches) in a greedy manner.

Let $r$ be the first interval in $R$, and we make a group of $r$ by sequentially accessing $R$.
That is, if $r_j$ satisfies Definition \ref{definition:group}, $r_j$ is included in the group of $r$.
(Once the second condition of Definition \ref{definition:group} is not satisfied, we early stop the scan.)
After the scan is terminated, we exclude the group of $r$ from $R$, and iterate the above scan from the first interval in the remaining intervals.
This is repeated until $R$ becomes empty.

Note that making groups does not require query inputs, hence it can be done offline.
If a new interval $r$ is added to $R$, we find a group $G(\cdot)$ that satisfies Definition \ref{definition:group} for $r$, and then add it to the group.
On the other hand, if an existing interval $r'$ is removed from $R$, we simply remove it from the corresponding group.

\vs
\noindent
\textbf{Comparison skipping for $r$ and $s$ in a batch.}
Thanks to the grouping technique, we can further reduce interval comparisons.

\begin{corollary}   \label{corollary:batch}
Given a group $G(r_a)$ of an interval $r_a \in R$, let $r_b$ be the last interval of $G(r_a)$.
If $\theta_{end}(r_b) \leq A_{i}^{end,min}[j]$ and $s.start \leq r_{a}.start$, for every interval $s$ in $c_{i,j}$, we have $(r,s) \in R \bowtie_{\epsilon} S$ for all $r \in G(r_a)$.
\end{corollary}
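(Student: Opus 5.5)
The plan is to reduce the statement to case (i) of \Cref{theorem:filter}, applied separately to each $r \in G(r_a)$, mirroring the proof of Corollary~\ref{corollary:cell_1}. Fix an arbitrary $r \in G(r_a)$ and an arbitrary $s$ in $c_{i,j}$. Case (i) requires two facts: $s.start \leq r.start$ and $\theta_{end}(r) \leq s.end$. We also need $|r| \geq \epsilon$ and $|s| \geq \epsilon$; I will assume this as the standing assumption under which the algorithm processes $r$, since intervals with $|r| < \epsilon$ are discarded up front.

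\textbf{Step 1: the start condition.} By Definition~\ref{definition:group}, every member satisfies $r_a.start \leq r.start$. Combining this with the hypothesis $s.start \leq r_a.start$ gives $s.start \leq r.start$ for every $r \in G(r_a)$.

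\textbf{Step 2: the end condition.} Here I must show that $\theta_{end}(r) \leq \theta_{end}(r_b)$ for every $r$ in the group. Since $\theta_{end}(r) = r.start + \epsilon$, this is equivalent to $r.start \leq r_b.start$. The greedy construction scans $R$ sorted by start point in ascending order and appends members sequentially, so the last interval $r_b$ has the largest start point in $G(r_a)$. It follows that
\begin{align*}
\theta_{end}(r) \leq \theta_{end}(r_b) \leq A_{i}^{end,min}[j] = \min_{s' \in c_{i,j}} s'.end \leq s.end .
\end{align*}

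\textbf{Step 3: conclusion.} With both conditions established, case (i) of \Cref{theorem:filter} yields $l(r,s) \geq \epsilon$, so $(r,s) \in R \bowtie_{\epsilon} S$. Because $r$ and $s$ were arbitrary, the claim holds for all pairs.

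The main obstacle is Step 2, because it depends on what ``last interval'' means. The definition of a group alone does not order its members; the ordering comes from the sorted-by-start greedy scan, so I will cite that construction explicitly. If ``last'' were read differently, the fallback is to replace $r_b$ by the member with maximum start point, or to bound $\theta_{end}(r)$ by $r_a.start + \gamma + \epsilon$ using the second rule of Definition~\ref{definition:group}.
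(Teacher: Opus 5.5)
Your proof is correct and is essentially the paper's argument: the paper likewise reduces to Corollary~\ref{corollary:cell_1} (i.e., case (i) of \Cref{theorem:filter}) via $r_a.start \le r.start$ and $\theta_{end}(r) \le \theta_{end}(r_b)$. The paper states only $\theta_{end}(r_a)\le\theta_{end}(r_b)$ and leaves implicit what you make explicit, namely that $r_b$ has the largest start point because of the sorted greedy scan. Your flagging of $|r|\ge\epsilon$ is apt, since the paper relies on it silently, while $|s|\ge\epsilon$ need not be assumed because it follows from $s.start\le r.start$ and $s.end\ge r.start+\epsilon$.
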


\begin{proof}
From Definition \ref{definition:group}, it is guaranteed that $r_a.start \leq r_b.start$ and $\theta_{end}(r_a) \leq \theta_{end}(r_b)$.
Hence, from Corollary \ref{corollary:cell_1}, this corollary holds.
\end{proof}

\begin{corollary}   \label{corollary:batch_2}
% Given a group $G(r_a)$ of an interval $r_a \in R$, if $A_{i}^{end,max}[j] < \theta_{end}(r_a) + \gamma$, for every interval $s$ in $c_{i,j}$ and every interval $r \in G(r_a)$, we have $(r,s) \notin R \bowtie_{\epsilon} S$.
Given a group $G(r_a)$ of an interval $r_a \in R$, if $A_{i}^{start,min}[j] > \theta_{start}(r_a) + \gamma$, for every interval $s$ in $c_{i,j}$ and every interval $r \in G(r_a)$, we have $(r,s) \notin R \bowtie_{\epsilon} S$.
\end{corollary}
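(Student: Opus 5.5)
The plan is to reduce the statement to Corollary \ref{corollary:cell_3}, applied separately to each $r \in G(r_a)$. Fix a cell $c_{i,j}$ with $A_{i}^{start,min}[j] > \theta_{start}(r_a) + \gamma$, and fix an arbitrary $r \in G(r_a)$. If we can show $\theta_{start}(r) \le \theta_{start}(r_a) + \gamma$, then chaining gives
\begin{align*}
    \theta_{start}(r) \le \theta_{start}(r_a) + \gamma < A_{i}^{start,min}[j],
\end{align*}
and Corollary \ref{corollary:cell_3} yields $(r,s) \notin R \bowtie_{\epsilon} S$ for every $s$ in $c_{i,j}$. Since $r$ was arbitrary, the claim follows for all of $G(r_a)$.

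The key step is the bound $\theta_{start}(r) \le \theta_{start}(r_a) + \gamma$. By definition $\theta_{start}(r) = r.end - \epsilon$ and $\theta_{start}(r_a) = r_a.end - \epsilon$. So it suffices to have $r.end \le r_a.end + \gamma$. This is exactly supplied by the third rule of Definition \ref{definition:group}, with $r_a$ as the group's representative and $r$ in the role of $r'$: it gives $r.end \le \min\{r_a.end + \gamma, A_{col}^{max}[j] + \epsilon\} \le r_a.end + \gamma$. The second term in the minimum, and the first two rules of the definition, are not needed here.

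The only point needing care, and the main (mild) obstacle, is that the greedy construction must actually produce groups whose members satisfy Definition \ref{definition:group} relative to the representative $r_a$, i.e., the first interval of the group. I will state explicitly that membership is always checked against $r_a$, not against the previously added member, so the end-point bound is with respect to $r_a.end$ and does not accumulate. If one also wants the case $r = r_a$, it is trivial since $\gamma \ge 0$.

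Finally, for the cases $|r| < \epsilon$ or $|s| < \epsilon$, which Theorem \ref{theorem:filter} formally excludes, I will note that case (ii) of Theorem \ref{theorem:filter} does not actually use those assumptions: its derivation only bounds $\min\{r.end, s.end\} \le r.end$. So Corollary \ref{corollary:cell_3} applies regardless, and in any case \Cref{alg:overlap_query} already discards $r$ with $|r| < \epsilon$.
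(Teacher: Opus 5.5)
Your proposal is correct and follows the paper's route: the paper's proof is just ``From Corollary \ref{corollary:cell_3},'' and you supply the omitted step. That step is that the third rule of Definition \ref{definition:group} gives $r.end \le r_a.end + \gamma$, hence $\theta_{start}(r) \le \theta_{start}(r_a) + \gamma < A_{i}^{start,min}[j]$ for every $r \in G(r_a)$.
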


\begin{proof}
From Corollary \ref{corollary:cell_3}.
\end{proof}

\noindent
\textbf{Algorithm description.}
\Cref{alg:batch} describes the join procedure between $G(r)$ (i.e., an interval group of $r$) and $S$.
It is based on \Cref{alg:overlap_query_opt} and optimized for interval groups.
(This algorithm is iterated for each group to obtain $R \bowtie_{\epsilon}S$.)
The main differences appear in lines \ref{algo:diff_1_b}--\ref{algo:diff_1_e} and lines \ref{algo:diff_2_b}--\ref{algo:diff_2_e}.
The first one uses Corollary \ref{corollary:batch} to include the join result of $G(r) \times S^{end}_{i,j}$ in a batch, and this cell is never accessed again.
The other uses Corollary \ref{corollary:batch_2} to skip the cell for every $r' \in G(r)$, and this cell is never accessed in the future for this group.

\begin{algorithm}[!t]
    \caption{Proposed Algorithm for Batch Intervals} \label{alg:batch}
    \DontPrintSemicolon
    \KwIn{$\mathcal{G}$, $G(r)$, and $\epsilon$}
    \KwOut{$J_{G(r)} = G(r) \bowtie_\epsilon S$}

    $J_{G(r)} \gets \varnothing$\;
    $\theta_{start}(r) \gets r.end - \epsilon$\;
    $\theta_{end}(r) \gets r.start + \epsilon$\;
    $idx \gets$ \textsc{Find-Index}$(A_{col}^{max}, r.start)$\;
    $idx' \gets$ \textsc{Find-Index}$(A_{col}^{max}, \theta_{start}(r))$\;
    $C \gets \varnothing$\;
    \ForEach {$r' \in G(r)$ s.t. $|r'| \geq \epsilon$ in the reverse order}{
        \ForEach {$i = 0$ to $idx - 1$}{
            $j \gets$ \textsc{Find-Index}$(A_{i}^{end,max}, \theta_{end}(r))$\;
            \While{$j \leq |A_{i}^{end,max}|$}{
                \If {$c_{i,j} \notin C$}{
                    \eIf {$\theta_{end}(r') \leq A_{i}^{end,min}[j]$}{              \label{algo:diff_1_b}
                        \ForEach{$s \in S^{end}_{i,j}$ and $r'' \in G(r)$ s.t. $r''.start \leq r'.start$}{         \label{algo:diff_1_b_}
                            $J_{G(r)} \gets J_{G(r)} \cup \{(r'',s)\}$
                        }
                        $C \gets C \cup \{c_{i,j}\}$                                \label{algo:diff_1_e}
                    }{
                        \ForEach{$s \in S^{end}_{i,j}$ s.t. $\theta_{end}(r') \leq s.end$}{
                            $J_{G(r)} \gets J_{G(r)} \cup \{(r',s)\}$
                        }
                    }
                    $j \gets j + 1$
                }
            }
        }
        \ForEach {$i = idx$ to $idx'$}{
            $j \gets$ \textsc{Find-Index}$(A_{i}^{end,max}, \theta_{end}(r))$\;
            \While{$j \leq |A_{i}^{end,max}|$}{
                \If {$c_{i,j} \notin C$}{
                    \eIf {$i < idx'$}{
                        \eIf {$r.end + \gamma < A_{i}^{end,min}[j]$}{
                            Run lines \ref{algo:diff_1_b_}--\ref{algo:diff_1_e}
                        }{
                            Run lines \ref{algo2:cell_threshold_b}--\ref{algo2:cell_threshold_e} of Algorithm \ref{alg:overlap_query_opt}    \label{algo:other}
                        }
                    }{
                        \If {$A_{i}^{start,min}[j] >\theta_{start}(r) + \gamma$}{   \label{algo:diff_2_b}
                            $C \gets C \cup \{c_{i,j}\}$, $j \gets j + 1$\; 
                            \textbf{continue}   \label{algo:diff_2_e}
                        }
                        Run line \ref{algo:other}
                    }
                    $j \gets j + 1$
                }
            }
        }
        % \ForEach {$i = idx$ to $idx'$}{
        %     $j \gets$ \textsc{Find-Index}$(A_{i}^{end,max}, \theta_{end}(r))$\;
        %     \While{$j \leq |A_{i}^{end,max}|$}{
        %         \If {$c_{i,j} \notin C$}{
        %             \If {$A_{i}^{start,min}[j] > \theta_{start}(r)$}{   \label{algo:diff_2_b}
        %                 $C \gets C \cup \{c_{i,j}\}$\;
        %                 $j \gets j + 1$\;
        %                 \textbf{continue}              \label{algo:diff_2_e}
        %             }
        %             \eIf {$i < idx'$}{
        %                 \eIf {$r.end + \gamma < A_{i}^{end,min}[j]$}{
        %                     Run lines \ref{algo:diff_1_b_}--\ref{algo:diff_1_e}
        %                 }{
        %                     Run lines \ref{algo2:cell_threshold_b}--\ref{algo2:cell_threshold_e} of \Cref{alg:overlap_query_opt}    \label{algo:other}
        %                 }
        %             }{
        %                 Run line \ref{algo:other}
        %             }
        %             $j \gets j + 1$
        %         }
        %     }
        % }
    } 
\end{algorithm}

\section{Experiment}    \label{sec:experiment}
This section reports our experimental results.
All experiments were conducted on an Ubuntu 24.04 LTS machine with an Intel Core i9-10980XE@3.0GHz CPU and 128GB of RAM.

\subsection{Setting}    \label{sec:experiment:setting}
\noindent
\textbf{Dataset.}
We used three real-world datasets: BTC\footnote{\url{https://www.kaggle.com/datasets/swaptr/bitcoin-historical-data}}, Books \cite{bouros2021memory}, and Renfe\footnote{\url{https://www.kaggle.com/datasets/thegurusteam/spanish-high-speed-rail-system-ticket-pricing}}, which are used in recent interval works \cite{amagata2024efficient,amagata2024efficient_,amagata2024independent,amagata2024independent_,bouros2025relevance}.
\begin{itemize}
    \setlength{\leftskip}{-2.0mm}
    \item   BTC is a set of historical price intervals of Bitcoin.
            We used low and high prices as the start and end points, respectively.
    \item   Books is a set of the borrowing periods of books in the Aarhus libraries.
    \item   Renfe is Spanish rail trip data.
            We used departure time and arrival time as the start and end points, respectively.
\end{itemize}
\Cref{tab:dataset} shows the statistics of each dataset.
For each dataset, we randomly sampled some intervals, and they were used as $R$, and the remaining intervals were used as $S$.
In our experiments, we varied $|R|/|S|$ in the range of $\{0.25, 0.5, 0.75, 1.0\}$, and, by default, $|R|/|S| = 1$.
Notice that $R \bowtie_{\epsilon} S = S \bowtie_{\epsilon} R$, so the case of $|R|/|S| > 1$ does not occur by exchanging $R$ and $S$.
Furthermore, let $l_{avg}$ be the average length of the intervals in a given dataset.
We used $p \times l_{avg}$ as $\epsilon$, and varied $p$ in the range of $\{0.2, 0.4, 0.6, 0.8\}$.
By default, $p = 0.4$.

\begin{table}[!t]
    \centering
    \caption{Dataset statistics}
    \label{tab:dataset}
    \begin{tabular}{lccc}   \toprule
    \textbf{}           & BTC           & Books         & Renfe    \\ \midrule
    Cardinality         & 3,766,762     & 2,312,602     & 38,753,060        \\
    Domain size         & 6,876,400     & 31,507,200    & 52,163,400        \\
    Minimum length      & 0             & 0             & 1,320             \\
    Maximum length      & 547,077       & 31,406,400    & 44,700            \\
    \bottomrule
\end{tabular}
\end{table}

\vs
\noindent
\textbf{Evaluated algorithm.}
Our experiments evaluated the following algorithms.
\begin{itemize}
    \setlength{\leftskip}{-2.0mm}
    \item   FS \cite{bouros2017forward,bouros2021memory}:
            This is the state-of-the-art interval join algorithm.
            To deal with our problem, i.e., the overlap duration constraint, we extended FS as follows.
            When it identifies that $r$ and $s$ overlap, it computes $l(r,s)$ and adds this pair into the result iff $l(r,s) \geq \epsilon$.
    \item   RD-index \cite{ceccarello2023indexing,ceccarello2025indexing}:
            This is the state-of-the-art algorithm that considers interval duration (not overlap duration).
            Because this algorithm does not consider joins but searches, we run a search query $r \in R$ on $S$, i.e., we ran nested-loop join.
    \item   Rel \cite{bouros2025relevance}:
            This algorithm considers threshold queries, which output every interval for which the overlap duration between a query and the interval is at least the given threshold.
            Similar to RD-index, we ran this search on $S$ by using each $r \in R$ as a query.
            We used an interval tree as its base index.
    \item   Ours\footnote{\url{https://github.com/NaoyaEhara/duration-constrained_interval_joins}}:
            This is our proposed algorithm shown in \Cref{alg:overlap_query_opt}.
            Note that \Cref{alg:batch} is evaluated in \Cref{sec:experiment:batch}.
\end{itemize}
These algorithms were single-threaded, implemented in C++, and compiled by g++ 13.3.0 with -O3 flag.

\subsection{Pre-processing Time}
\Cref{tab:preprocessing} shows the pre-processing (i.e., indexing) time of each algorithm.
Note that each algorithm sorts $R$ and $S$.
Although RD-index, Rel, and ours build a data structure on $S$ only, they sort $R$ by start point to improve the cache hit ratio.

FS needs a shorter pre-processing time because it does not build a complex data structure.
Although RD-index and ours build a similar data structure, ours requires a longer pre-processing time than RD-index.
This is because ours prepares additional sorted arrays shown in \Cref{sec:proposal}.
The pre-processing time of Rel is much slower than that of the others, because it builds a hierarchical structure and requires recursive partitioning.
This processing is more complex than building a grid-based structure.

\begin{table}[!t]
    \centering
    \caption{Pre-processing time [sec]}
    \label{tab:preprocessing}
    \begin{tabular}{lccc}   \toprule
        \textbf{}   & BTC   & Books & Renfe \\ \midrule
        FS          & 0.24  & 0.11  & 2.42  \\
        RD-index    & 0.48  & 0.24  & 3.95  \\
        Rel         & 3.49  & 0.71  & 28.97 \\
        Ours        & 0.57  & 0.27  & 4.37  \\
    \bottomrule
    \end{tabular}
\end{table}

\subsection{Memory Usage}
\Cref{tab:memory} exhibits the memory usage of each algorithm.
Note that FS does not build any data structure offline, so its memory usage is omitted.
Rel requires more memory than the others, as it employs a hierarchical structure with a non-contiguous memory layout.
Ours requires almost twice the memory usage of RD-index.
This is because ours maintains intervals in two arrays, sorted based on start and end points.
Nevertheless, the memory usage of our structure is reasonable for about 40 million intervals (i.e., Renfe).

\begin{table}
    \centering
    \caption{Memory usage [MB]}
    \label{tab:memory}
    \begin{tabular}{lccc} \toprule
        \textbf{}   & BTC       & Books & Renfe     \\ \midrule
        % FS & 0 & 0 & 0 \\
        RD-index    & 32.06     & 20.53 & 327.66    \\
        Rel         & 198.28    & 94.50 & 929.87    \\
        Ours        & 63.84     & 39.66 & 653.62    \\
        \bottomrule
    \end{tabular}
\end{table}

\subsection{Join Time}
\Cref{fig:size_ratio} shows the results of experiments varying $|R|/|S|$.
As $|R|/|S|$ increases, the join result size increases, thereby all algorithms need longer join time.
Our algorithm outperforms all competitors and usually requires 0.5x the time of the best baseline algorithm.
This result demonstrates the efficacy of our pruning rules based on \Cref{theorem:filter,theorem:filter2}.

Although RD-index and ours share a similar data structure, the performance difference is substantial.
This difference highlights the effectiveness of exploiting the grid structure for \Cref{theorem:filter,theorem:filter2}.
Ours can filter $(r,s)$ quickly if $0 < l(r,s) < \epsilon$, whereas FS has to evaluate all interval pairs $\in R \bowtie S$.
Therefore, ours is always faster than FS.
Rel also has an ability that filters $(r,s)$ if $0 < l(r,s) < \epsilon$.
The performance difference between ours and Rel suggests that our pruning is more efficient and tighter.

\begin{figure*}[!t]
    \begin{center}
        \subfigure[BTC]{%
            \includegraphics[width=0.32\linewidth]{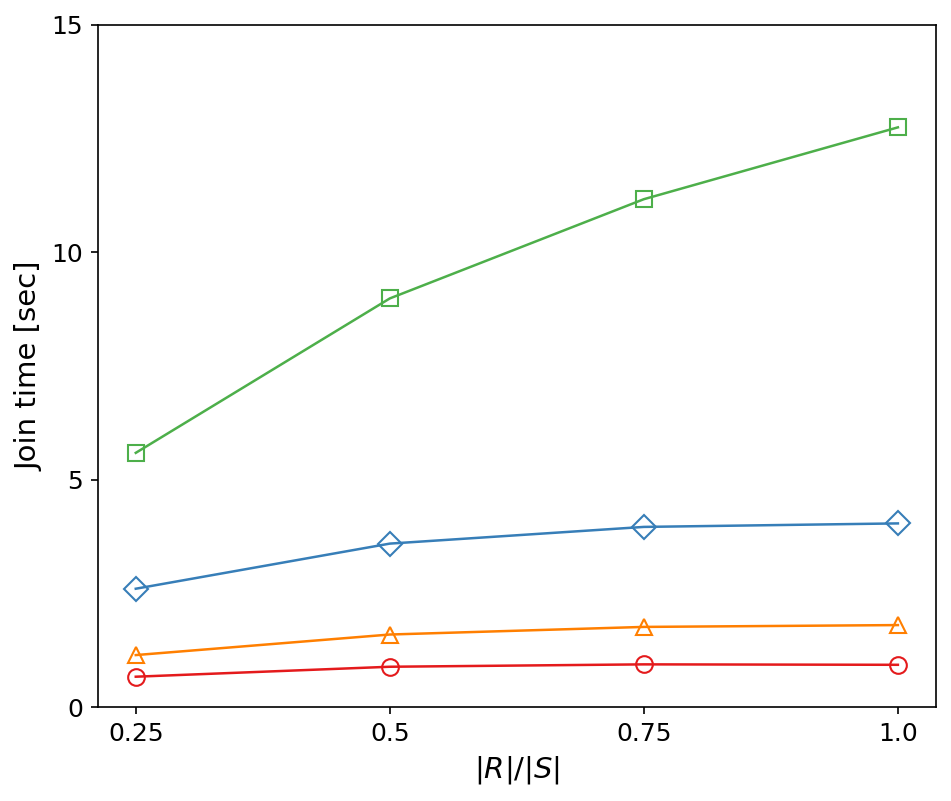}    \label{fig:size_btc}}
        \subfigure[Books]{%
            \includegraphics[width=0.32\linewidth]{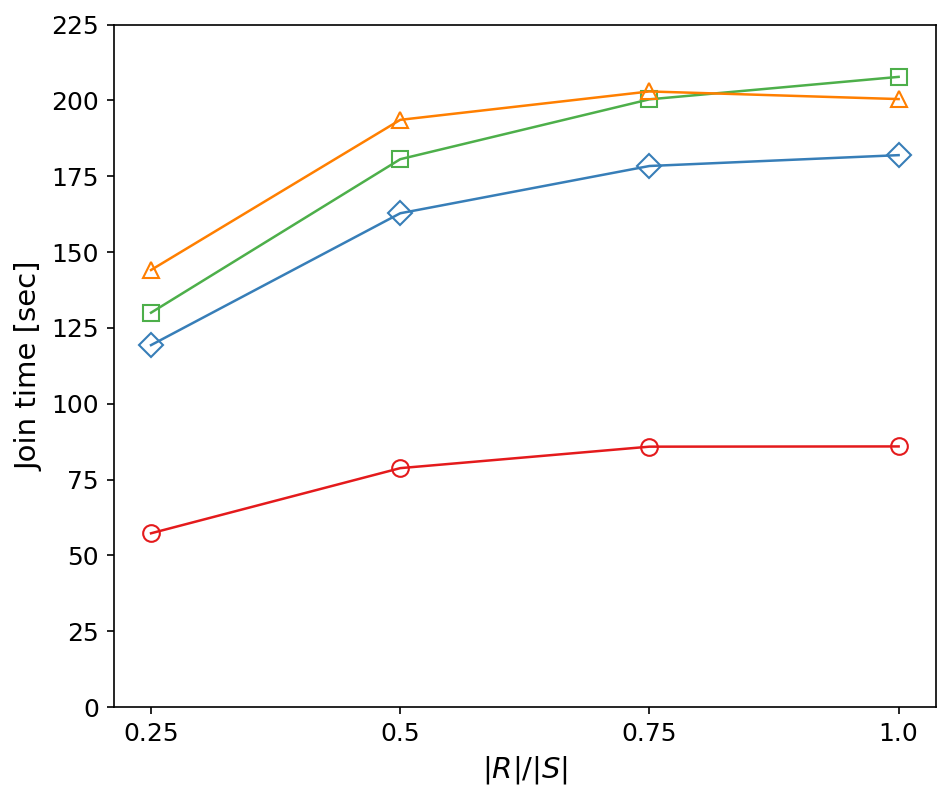}  \label{fig:size_books}}
        \subfigure[Renfe]{%
            \includegraphics[width=0.32\linewidth]{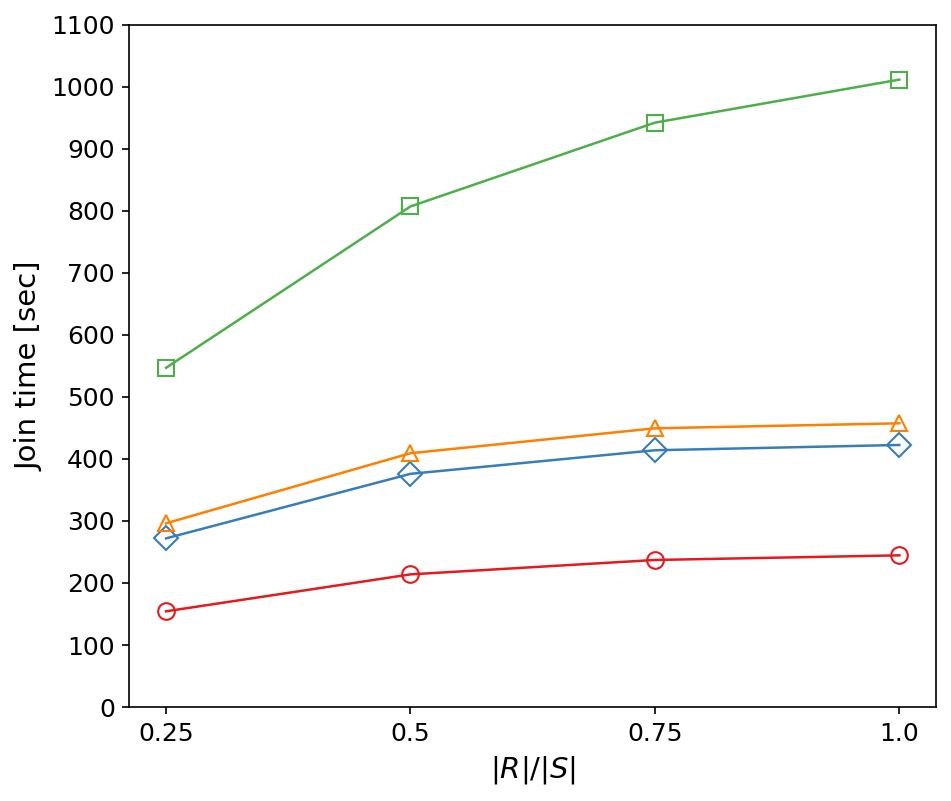}  \label{fig:size_renfe}}
        \caption{Impact of $|R|/|S|$: ``\textcolor{red}{$\circ$}'' shows \textcolor{red}{ours}, ``\textcolor{blue}{$\diamond$}'' shows \textcolor{blue}{FS}, ``\textcolor{green}{$\Box$}'' shows \textcolor{green}{RD-index}, and ``\textcolor{orange}{$\triangle$}'' shows \textcolor{orange}{Rel}.}
        \label{fig:size_ratio}
    \end{center}
\end{figure*}
\begin{figure*}[!t]
    \begin{center}
        \subfigure[BTC]{%
            \includegraphics[width=0.32\linewidth]{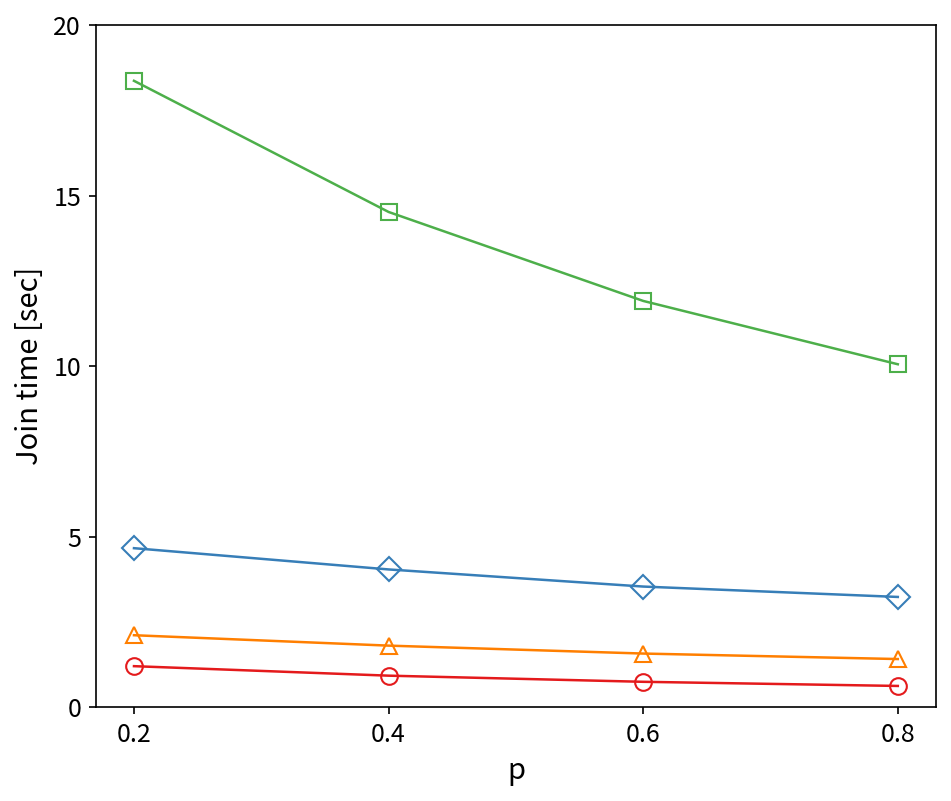}    \label{fig:eps_btc}}
        \subfigure[Books]{%
            \includegraphics[width=0.32\linewidth]{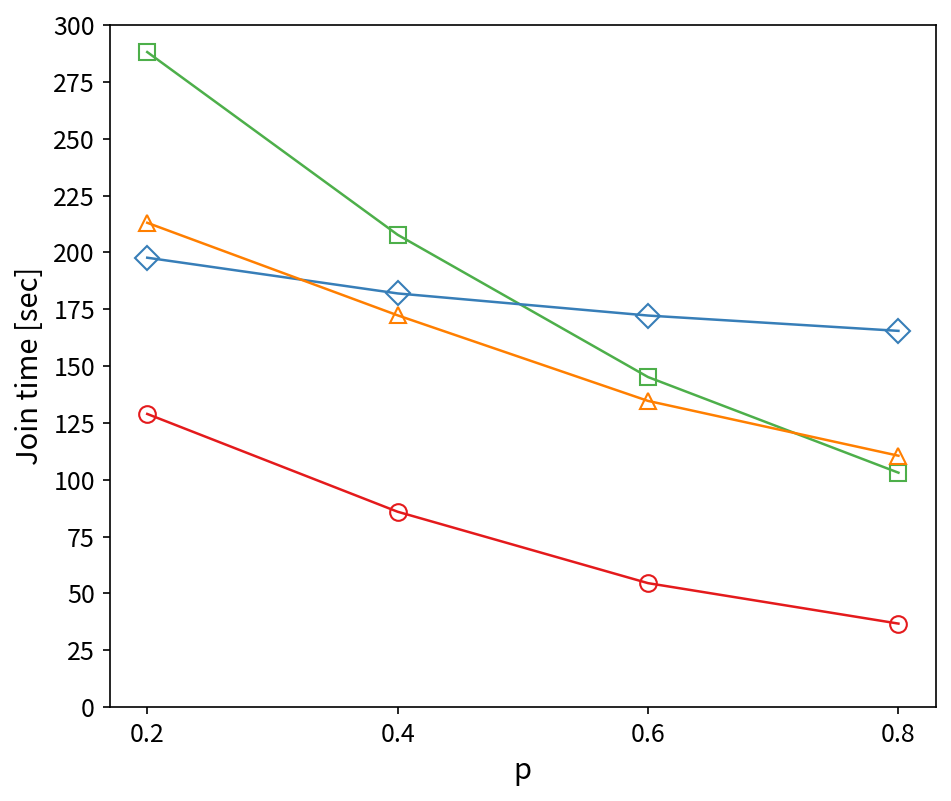}  \label{fig:eps_books}}
        \subfigure[Renfe]{%
            \includegraphics[width=0.32\linewidth]{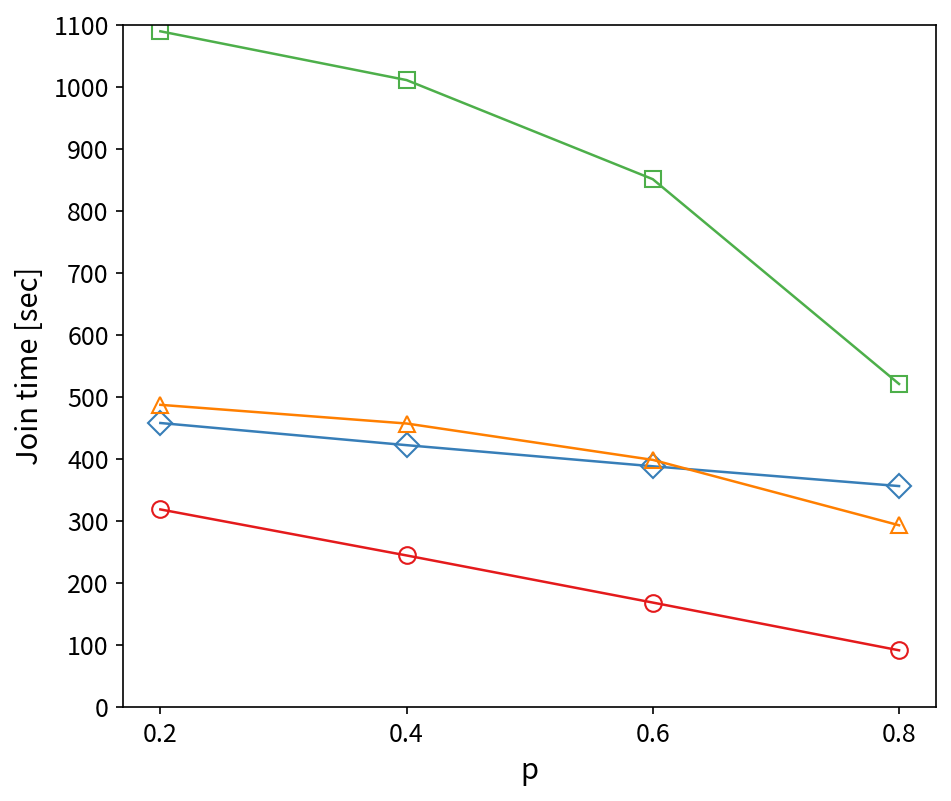}  \label{fig:eps_renfe}}
        \caption{Impact of $\epsilon$ ($p$): ``\textcolor{red}{$\circ$}'' shows \textcolor{red}{ours}, ``\textcolor{blue}{$\diamond$}'' shows \textcolor{blue}{FS}, ``\textcolor{green}{$\Box$}'' shows \textcolor{green}{RD-index}, and ``\textcolor{orange}{$\triangle$}'' shows \textcolor{orange}{Rel}.}
        \label{fig:epsilon}
    \end{center}
\end{figure*}

\subsection{Impact of $\epsilon$}
We next investigate the impact of $\epsilon$ by varying $p$, see \Cref{sec:experiment:setting}.
\Cref{fig:epsilon} shows the result.
As $p$, i.e., $\epsilon$, increases, $|R \bowtie_{\epsilon} S|$ decreases.
That is, many $(r,s)$ become unqualified, thereby pruning efficiency improves.
Even when $p$ is small or large, our algorithm still keeps outperforming the baseline algorithms.

\subsection{Ablation Study}
Here, we investigate the efficiency of the optimization proposed in \Cref{sec:proposal:opt}.
We thus compared the performance of \Cref{alg:overlap_query,alg:overlap_query_opt}.
\Cref{tab:ablation} shows the comparison result.

We observe that ours, i.e., \Cref{alg:overlap_query_opt}, consistently outperforms ours without opt, i.e., \Cref{alg:overlap_query}.
This result demonstrates that our proposed optimization functions well in practice and reduces unnecessary comparisons between intervals.

\begin{table}[!t]
    \centering
    \caption{Ablation study (join time [sec])}
    \label{tab:ablation}
    \begin{tabular}{lccc} \toprule
        \textbf{}                   & BTC  & Books  & Renfe     \\ \midrule
        Ours                        & 0.92 & 85.92  & 244.53    \\
        Ours without optimization   & 1.61 & 139.02 & 393.36    \\
        \bottomrule
    \end{tabular}
\end{table}

\subsection{Impact of Batch Processing} \label{sec:experiment:batch}
\Cref{tab:batch_preprocessing} shows the time to form groups/batches\footnote{We empirically tuned $\gamma$ to optimize the join time, and set $\gamma$ was $2 \times l_{avg}$, $0.12 \times l_{avg}$, and $0.0004 \times l_{avg}$ on BTC, Books, and Renfe, respectively.}.
Compared with \Cref{tab:preprocessing}, we find that grouping is not a bottleneck.
Recall that it is pre-processing, so the time to form groups is reasonable.

\begin{table}[!t]
    \centering
    \caption{Grouping time [sec]}
    \label{tab:batch_preprocessing}
    \begin{tabular}{lccc} \toprule
        \textbf{}       & BTC   & Books & Renfe \\ \midrule
        Ours (batch)    & 0.46  & 0.14  & 2.60  \\
        \bottomrule
    \end{tabular}
\end{table}

\Cref{tab:batch} shows the efficacy of \Cref{alg:batch}.
We see that our batch processing is useful when intervals are dense (i.e., Books and Renfe), whereas it is not when intervals are sparse (i.e., BTC).
If batch filtering (i.e., the main differences to \Cref{alg:overlap_query_opt}) fails, it becomes an additional cost.
The sparse case usually has such failures, so the batch-based join is slower than the nested-loop-based join on BTC.
On the other hand, the dense case usually succeeds the batch filtering, thereby reducing the total time from the nested-loop-based join.

\begin{table}[!t]
    \centering
    \caption{Impact of batch processing (join time [sec])}
    \label{tab:batch}
    \begin{tabular}{lccc} \toprule
        \textbf{}                   & BTC  & Books & Renfe  \\ \midrule
        Ours                        & 0.92 & 85.92 & 244.53 \\
        Ours with batch processing  & 1.45 & 50.20 & 151.15 \\
        \bottomrule
    \end{tabular}
\end{table}

\section{Related Work}  \label{sec:related-work}
This section reviews existing works w.r.t. interval joins and range search.

\subsection{Interval Join}
The problem of interval join is defined in Definition \ref{def:join}.
Many works addressed this problem \cite{gunadhi1991query,bouros2020band,bouros2017forward,bouros2018interval,bouros2021memory,cafagna2017disjoint,dignos2014overlap,piatov2016interval,piatov2021cache,dignos2022leveraging}.
We introduce some representative algorithms.

OIP \cite{dignos2014overlap} divides the domain of a given interval set into disjoint partitions.
Each interval is assigned to the minimum number of partitions that cover this interval.
An interval join query is processed based on partitions of $R$ and $S$.
That is, mini-joins are performed between overlapping partitions.
DIP \cite{cafagna2017disjoint} improves this approach by arranging each partition so that its intervals are disjoint.
This disjoint property reduces the number of interval comparisons.

EBI and LEBI \cite{piatov2016interval,piatov2021cache} are join algorithms based on the plane-sweep method.
They are designed to be cache-efficient and parallel-friendly.
FS \cite{bouros2017forward,bouros2021memory} is the state-of-the-art interval join algorithm.
It is also based on the plane-sweep method and improves EBI and LEBI.
FS alleviates unnecessary interval comparisons when the sweep-line stops at a start or an end point, compared with EBI and LEBI.
FS includes optimizations, such as bucketing, grouping, loop unrolling, and a decomposed data layout, which also support multi-threading. 
Note that their performances tend to degrade when dealing with long intervals, e.g., see \Cref{fig:size_books}.

OMJ \cite{dignos2022leveraging} decomposes an interval join into two range joins.
Its approach is compatible with existing database management systems (e.g., PostgreSQL), because it employs B$^+$-tree to run range joins.
However, as the experimental results in \cite{dignos2022leveraging} confirm, it is (slightly) outperformed by FS.

The above algorithms do not consider the overlap duration, so they need to employ one of the two approaches introduced in \Cref{sec:challenge} to handle our problem.
In \Cref{sec:experiment}, we empirically show that this approach is slower than our algorithm.

\subsection{Range Search}
A range search on interval data $S$ retrieves a subset $S' \subseteq S$ such that $s \in S'$ satisfies $l(r,s) > 0$ for a given query interval $r$.
(Similar to our join algorithm, interval join can be conducted by repeating range search of $r$ on $S$ for every $r \in R$.)
A classic yet still state-of-the-art range search structure is the interval tree \cite{edelsbrunner1980dynamic}.
It can return $S'$ in $O(\log n + |S'|)$ time.
Timeline index \cite{kaufmann2013timeline} is implemented in SAP HANA \cite{farber2012sap}.
This index is essentially a list, where the start and end points of all intervals are sorted.
HINT \cite{christodoulou2023hint,christodoulou2022hint} is a hierarchical one-dimensional grid.
Its design objective is similar to ours: HINT can identify intervals in the search result in a comparison-free manner.
\cite{wang2025tide} considers the range search problem in the external memory setting.
These works do not consider any duration constraint, so they are not appropriate for our problem.

Period index \cite{behrend2019period} considers range-duration queries.
The concept of range-duration queries is similar to our duration constraint, but it is different.
A range-duration query $r$ returns every interval $s \in S$ such that $l(r,s) > 0$ and $d_{min} \leq |s| \leq d_{max}$, where $d_{min}$ and $d_{max}$ are the duration constraints.
That is, this query considers not the overlap duration but the interval duration.
The period index is also a hierarchical one-dimensional grid.
RD-index \cite{ceccarello2023indexing,ceccarello2025indexing} also considers range-duration queries and improves the period index.
We extended the RD-index to deal with the overlap duration constraint.
\Cref{sec:experiment} confirms that our algorithm significantly outperforms this version of the RD-index.
In \cite{bouros2025relevance}, the authors define relevance queries: They measure the similarity between two intervals based on the Jaccard similarity, overlap ratio, and overlap duration.
They proposed lower- and upper-bounding techniques for these measures to filter unnecessary similarity computations.
In \Cref{sec:experiment}, we compared our algorithm with this approach, and the result demonstrates that our algorithm is faster in our problem.

To reduce the output size, \cite{amagata2024independent,amagata2024independent_,amagata2026firas} consider the independent range sampling problem on interval data, and this problem returns $t$ random samples of the range search result.
Some works \cite{amagata2025top,amagata2024efficient,amagata2024efficient_,xu2017efficiently,lee2026efficient} assume that each interval has a weight, and they consider returning $k$ intervals overlapping a given query interval with the best weight.
On the other hand, \cite{rauch2025fast} considers keywords as a constraint, i.e., it considers returning intervals that overlap a given query interval and have the specified keywords.
The settings of these works are totally different from ours.

\section{Conclusion}    \label{sec:conclusion}
This work addressed the problem of duration-constrained interval join for the first time.
Applying existing join and search techniques to this problem is inefficient, so we proposed a new algorithm that filters unnecessary computations in a batch.
To improve this algorithm, we proposed two optimization techniques.
We conducted extensive experiments on real-world interval data, and the results demonstrate that our algorithm outperforms existing techniques.

\vs
\noindent
\textbf{Remark after acceptance.}
Although the proceedings version claims that our problem has not been studied yet, \cite{hu2022computing} already considers it.
We hence remove this claim in this version.

\section*{Acknowledgment}
This research is partially supported by JSPS KAKENHI Grant Number 24K14961.

%%
%% The next two lines define the bibliography style to be used, and
%% the bibliography file.
\bibliographystyle{IEEEtran}
\bibliography{acmart}

\end{document}